\newtheorem{definition}{Definition}
\newtheorem{theorem}{Theorem}
\newtheorem{problem}{Problem}
\tikzstyle{Z dot}=[inner sep=0mm, minimum size=2.25mm, shape=circle, draw=black, fill={rgb,255: red,216; green,248; blue,216}, tikzit fill={rgb,255: red,216; green,248; blue,216}]
\tikzstyle{Z phase dot}=[minimum size=5mm, font={\footnotesize\boldmath}, shape=rectangle, inner sep=0.2mm, outer sep=-2mm, scale=0.8, tikzit shape=circle, draw=black, fill={rgb,255: red,216; green,248; blue,216}, tikzit fill={rgb,255: red,216; green,248; blue,216}, tikzit draw=teal]
\tikzstyle{X dot}=[inner sep=0mm, minimum size=2.25mm, shape=circle, draw=black, fill={rgb,255: red,232; green,165; blue,165}, tikzit fill={rgb,255: red,232; green,165; blue,165}]
\tikzstyle{X phase dot}=[minimum size=5mm, font={\footnotesize\boldmath}, shape=rectangle, inner sep=0.2mm, outer sep=-2mm, scale=0.8, tikzit shape=circle, draw=black, fill={rgb,255: red,232; green,165; blue,165}, tikzit fill={rgb,255: red,232; green,165; blue,165}, tikzit draw=teal]
\tikzstyle{Control node}=[inner sep=0mm, minimum size=1.15mm, shape=circle, draw=black, fill=black]
\tikzstyle{Target node}=[cross, scale=0.8, draw=black, thick, circle]
\tikzstyle{Gate}=[shape=rectangle, thick, minimum width=5.5mm, inner ysep=1.3mm, fill=white, draw=black, font={\small}]
\tikzstyle{Gray node}=[inner sep=0mm, minimum size=2.5mm, font={\tiny}, text=white, fill={rgb,255: red,55; green,65; blue,81}, draw={rgb,255: red,55; green,65; blue,81}, shape=circle]
\tikzstyle{White node}=[inner sep=0mm, minimum size=2.5mm, font={\tiny}, fill=white, draw=black, shape=circle]
\tikzstyle{Graph node}=[inner sep=0mm, minimum size=5mm, font={\footnotesize}, fill=white, draw=black, shape=circle]
\tikzstyle{Label node}=[fill=none, draw=none, shape=circle, inner sep=0mm, font={\scriptsize}, text={rgb,255: red,55; green,65; blue,81}, tikzit draw=blue, tikzit fill=white, tikzit shape=circle]
\tikzstyle{Text node}=[fill=none, text width=2.5cm, align=flush center, draw=none, inner sep=0mm, shape=rectangle, font={\sffamily\scriptsize}, text={rgb,255: red,55; green,65; blue,81}, tikzit draw=blue, tikzit fill=white, tikzit shape=rectangle]
\tikzstyle{Label black}=[fill=none, draw=none, shape=circle, font={\scriptsize}, text=black, tikzit draw=black, tikzit fill=white]
\tikzstyle{Circuit label}=[fill=none, draw=none, shape=circle, font={\small}, text=black, tikzit draw=black, tikzit fill=white]
\tikzstyle{Circuit edge}=[-, thick]
\tikzstyle{Hadamard edge}=[-, dashed, dash pattern=on 2pt off 1pt, thick, draw={rgb,255: red,68; green,136; blue,255}]
\tikzstyle{Dashed grey}=[-, dashed, dash pattern=on 1pt off 3pt, line cap=round, thick, draw={rgb,255: red,55; green,65; blue,81}]
\tikzstyle{Double edge}=[-, double, thick, shorten <=-1mm, shorten >=-1mm, double distance=2pt]
\tikzstyle{Arrow}=[-stealth, very thick, line cap=round, draw={rgb,255: red,55; green,65; blue,81}]
\tikzstyle{Arrow2}=[-stealth, thick, line cap=round, draw=black]
\tikzstyle{Interval}=[|-|, very thick, draw={rgb,255: red,55; green,65; blue,81}]
\title{Qubit-count optimization using ZX-calculus}
\author[1,2]{Vivien Vandaele}
\affil[1]{Eviden Quantum Lab, Les Clayes-sous-Bois, France}
\affil[2]{Université de Lorraine, CNRS, Inria, LORIA, F-54000 Nancy, France}
\date{}
\begin{document}
\maketitle

\begin{abstract}
    We propose several methods for optimizing the number of qubits in a quantum circuit while preserving the number of non-Clifford gates.
    One of our approaches consists in reversing, as much as possible, the gadgetization of Hadamard gates, which is a procedure used by some $T$-count optimizers to circumvent Hadamard gates at the expense of additional qubits.
    We prove the NP-hardness of this problem and we present an algorithm for solving it.
    We also propose a more general approach to optimize the number of qubits by showing how it relates to the problem of finding a minimal-width path-decomposition of the graph associated with a given ZX-diagram.
    This approach can be used to optimize the number of qubits for any computational model that can natively be depicted in ZX-calculus, such as the Pauli Fusion computational model which can represent lattice surgery operations.
    We also show how this method can be used to efficiently optimize the number of qubits in a quantum circuit by using the ZX-calculus as an intermediate representation.
\end{abstract}

\section{Introduction}

Besides their fundamental role in quantum computing, qubits are a crucial resource as they can be used as a trade-off for some other resources.
The most prominent example is provided by the field of quantum error correction: multiple error-prone qubits can be used to form one reliable logical qubit.
Moreover, additional qubits can be valuable for the design and implementation of quantum algorithms as they can sometimes be used to lower the execution time, notably by reducing the computational depth of certain quantum algorithms using various optimization procedures or by enabling their parallelization.
Finally, qubits can also be used as a trade-off for quantum gates.
For instance, implementing the $CCZ$ gate requires $7$ $T$ gates, whereas it can be done with only $4$ $T$ gates by incorporating an ancillary qubit~\cite{jones2013low}.
Also, several algorithms designed for optimizing the number of $T$ gates are heavily relying on ancillary qubits~\cite{heyfron2018efficient, de2020fast, ruiz2024quantum, vandaele2024lower}.

As such, optimizing the number of qubits can enable the execution of a given quantum algorithm if the number of qubits at disposal is insufficient, or it can free up some qubits which can then be used as a trade-off for other resources.
In either case, the optimization of the number of qubits is more beneficial if it is not done at the expense of some other critical resources.
For instance, non-Clifford gates, which are fundamental for universal quantum computation, are critical resrouces since such gates are costly to implement fault-tolerantly in most quantum error-correcting codes.
In this work, we present several methods for efficiently optimizing the number of qubits in quantum circuits and in ZX-diagrams depicting lattice surgery operations, without increasing the number of non-Clifford gates.
Consequently, the circuits generated by our methods have both an optimized number of non-Clifford gates and an optimized number of qubits, two of the most essential resources for fault-tolerant quantum computing.

We present two novel approaches for optimizing the number of qubits in a given quantum circuit.
The first one is based on the degadgetization of Hadamard gates, which is the reverse process of Hadamard gates gadgetization as depicted in Figure~\ref{fig:gadgetization}.
This approach is particularly useful for Clifford$+T$ circuits in which the number of $T$ gates has been optimized.
Indeed, the $T$-count optimizers yielding the best performances are relying on this measurement-based gadget to circumvent Hadamard gates and better optimize the number of $T$ gates~\cite{heyfron2018efficient, de2020fast, ruiz2024quantum, vandaele2024lower}.
In some cases, this can drastically increase the number of qubits required to implement the optimized quantum circuit.
This can be impractical simply because the number of ancillary qubits at disposal may be limited, or it can counteract the gains achieved through $T$-count optimization.
For instance, in the surface code, implementing the $T$ gate fault-tolerantly can be done via magic states distillation, which is a procedure inducing an overhead in time and in the number of qubits~\cite{bravyi2005universal}.
However, it has been shown that for some compilers designed for the surface code, the overhead of magic states distillation can be significantly lower than the overhead induced by the surface code for encoding reliable logical qubits during the whole computation, especially for large quantum circuits~\cite{litinski2019magic}.
In this context, increasing the number of qubits to lower the number of $T$ gates can be less impactful, or, in some cases, counterproductive.

\begin{figure}[t]
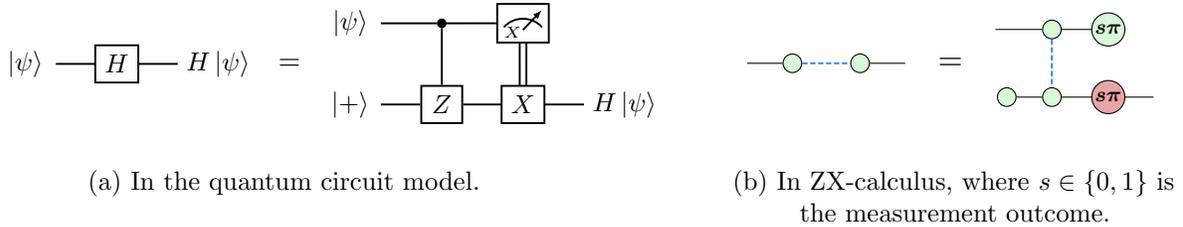

    \centering
    \begin{subfigure}{0.59\textwidth} \centering
        \scalebox{0.9}{
            \tikzfig{Hadamard_gadgetization}
        }
        \caption{In the quantum circuit model.\newline}
    \end{subfigure}
    \begin{subfigure}{0.4\textwidth}
        \centering
        \tikzfig{Hadamard_gadgetization_zx}
        \caption{In ZX-calculus, where $s \in \{0, 1\}$ is the measurement outcome.}
    \end{subfigure}
    \caption{Equality corresponding to the gadgetization of a Hadamard gate.}
    \label{fig:gadgetization}
\end{figure}

An approach for lowering the number of qubits required by this Hadamard gates gadgetization procedure consists in optimizing the number of Hadamard gates in the initial circuit, as done in References~\cite{de2020fast, vandaele2024optimal}.
The optimization process then consists in optimizing the number of Hadamard gates, gadgetizing them, and optimizing the number of $T$ gates.
To further lower the overhead caused by this Hadamard gates gadgetization stratagem, we propose to append a final step to this process, called Hadamard gates degadgetization, which consists in reversing, as much as possible, the gadgetization of Hadamard gates.
We will show that if certain conditions are satisfied, then it is possible to undo the gadgetization of a Hadamard gate and reduce the number of auxiliary qubits by one without increasing the $T$-count.
To the best of our knowledge, this method for optimizing the number of qubits has not been previously considered.

The second approach we propose consists in using the ZX-calculus as an intermediate language to establish a connection between our qubit-count optimization problem and well-known graph-theoretical problems.
The process then consists in translating a given quantum circuit into a ZX-diagram, performing some transformation on this ZX-diagram and translating it back into a quantum circuit with an optimized number of qubits.
This approach naturally includes various methods for optimizing the number of qubits, such as the degadgetization of Hadamard gates, as described above, or qubit reuse with mid-circuit measurements and resets.
Qubit reuse consists in finding qubits that can be measured before the end of the circuit, resetting them, and reusing them for the rest of the computation.
This approach has already been considered, in particular for near-term quantum computers~\cite{ding2020square, hua2023caqr, decross2023qubit, brandhofer2023optimal, sadeghi2022quantum}.
Our approach differs from these works by also considering both the insertion of new measurements and the removal of some unnecessary measurements (as done for example with the degadgetization of a Hadamard gate) within the circuit to optimize the number of qubits.
These deeper modifications of the circuit are facilitated by the simple and intuitive rules of the ZX-calculus.
For example, the gadgetization of a Hadamard gate, as depicted in Figure~\ref{fig:gadgetization}, can easily be proved using fundamental rules of the ZX-calculus.

Besides its usage as an intermediate representation for performing optimization in quantum circuits, the ZX-calculus can also be used as a direct representation for other computational models.
The Pauli Fusion computational model works natively with the ZX-calculus as it has been designed to reflect its basic structure~\cite{de2019pauli}.
Moreover, it has been shown that the generators of the ZX-calculus are closely related to the basic operations of lattice surgery in the surface code~\cite{de2020zx}.
Thus, the Pauli Fusion computational model can be used to represent lattice surgery operations via ZX-diagrams.
Our approach can then be used to optimize the number of logical qubits required to implement a given ZX-diagram using lattice surgery operations.
However, altering a sequence of lattice surgery operations or modifying a quantum circuit by inserting new measurements can introduce Pauli errors correlated with the measurements outcomes.
A method for correcting these errors is then required.
This can for example be done by relying on the procedure used in Reference~\cite{de2019pauli} to find a Pauli Fusion flow, which provides a correction strategy for deterministically realizing a sequence of lattice surgery operations.

The rest of this paper is organized as follows.
In Section~\ref{sec:preliminaries}, we introduce the background by providing a brief introduction to the ZX-calculus and how it can act as a description for lattice surgery operations.
In Section~\ref{sec:hadamard_gates_degadgetization}, we present an algorithm for degadgetizing a maximal number of Hadamard gates in a given quantum circuit, and we prove the NP-hardness of the problem.
Then, in Section~\ref{sec:qubits_opt_lattice_surgery}, we present algorithms for optimizing the number of qubits required to implement a given ZX-diagram using lattice surgery operations.
We demonstrate how these results can be used to optimize the number of qubits in quantum circuits in Section~\ref{sec:qubits_opt_quantum_circuits}.
Benchmarks are provided in Section~\ref{sec:bench} to evaluate the performances of our algorithms on a library of quantum circuits with an optimized number of non-Clifford gates.
Finally, in Section~\ref{sec:perspectives}, we propose various perspectives for further research work on this problem based on our results.

\section{Preliminaries}\label{sec:preliminaries}

In this section we present the concepts required for the understanding of the results described in this paper.
We first give a brief introduction to the ZX-calculus in Subsection~\ref{sub:zx_calculus}, with a focus on the ZX-calculus rules on which we will rely to optimize the number of qubits.
Then, in Subsection~\ref{sub:lattice_surgery}, we present the logical operations performed by the different lattice surgery procedures and how they can be described by the ZX-calculus.

\subsection{ZX-calculus}\label{sub:zx_calculus}

The ZX-calculus is a diagrammatic language composed of wires and nodes.
The nodes are either green (\tikzfig{Z_phase_dot}) or red (\tikzfig{X_phase_dot}), and they have an associated angle, also referred to as the phase and here denoted by $\alpha$, which can be omitted when it is equal to 0.
A spider is a green or red node that can have incident wires, referred to as input wires of the spider if they are attached to the left side of the node or output wires of the spider if they are attached to the right side of the node.
It represents the following linear maps:
\begin{align}
    \tikzfig{Z_spider}
    &:= \lvert 0 \rangle^{\otimes m} \langle 0 \rvert^{\otimes n} + e^{i\alpha} \lvert 1 \rangle^{\otimes m} \langle 1 \rvert^{\otimes n}\\
    \tikzfig{X_spider}
    &:= \lvert + \rangle^{\otimes m} \langle + \rvert^{\otimes n} + e^{i\alpha} \lvert - \rangle^{\otimes m} \langle - \rvert^{\otimes n}
\end{align}
Green node spiders are called Z spiders and red node spiders are called X spiders.

A ZX-diagram is composed of wires and spiders which can be connected by wires.
Vertical wires can be interpreted as either bent to the right or to the left; it doesn't alter the overall interpretation of the diagram.
A wire is said to be an open wire if at least one of its endpoints is not connected to a node.
For a given ZX-diagram $\mathcal{D}$, an open wire is an input wire of $\mathcal{D}$ if it has at least one open endpoint pointing towards the left, and it is an output wire of $\mathcal{D}$ if it has at least one open endpoint pointing towards the right.
For clarity, we will always put the open endpoint of input wires at the far left of the diagram (before all nodes), and the open endpoint of output wires at the far right of the diagram (after all nodes), as commonly done for quantum circuits.

We can easily translate a quantum circuit composed of CNOT and rotations gates into a ZX-diagram by using the following procedure:
\begin{itemize}
    \item Substitute all $Z$ rotations (respectively, $X$ rotations) of angle $\alpha$ with green nodes (respectively, red nodes) having the same angle $\alpha$.
    \item Substitute all $Z$ measurements (respectively, $X$ measurements) by red nodes (respectively, green nodes).
    \item Substitute all qubits initialization in the $\lvert + \rangle$ state (respectively, $\lvert 0 \rangle$ state) by green nodes (respectively, red nodes).
    \item For each CNOT gate, replace the control node with a green node and the target node with a red node.
    \item Keep the same connectivity.
\end{itemize}
For a Clifford$+T$ circuit, the associated ZX-diagram can be constructed by using the translation table represented in Figure~\ref{fig:quantum_circuit_zx_translation}.
As a shorthand notation, we will represent a Hadamard gate by a dashed blue edge:
\begin{equation}
    \tikzfig{Hadamard_edge_eq}
\end{equation}

\begin{figure}[t]
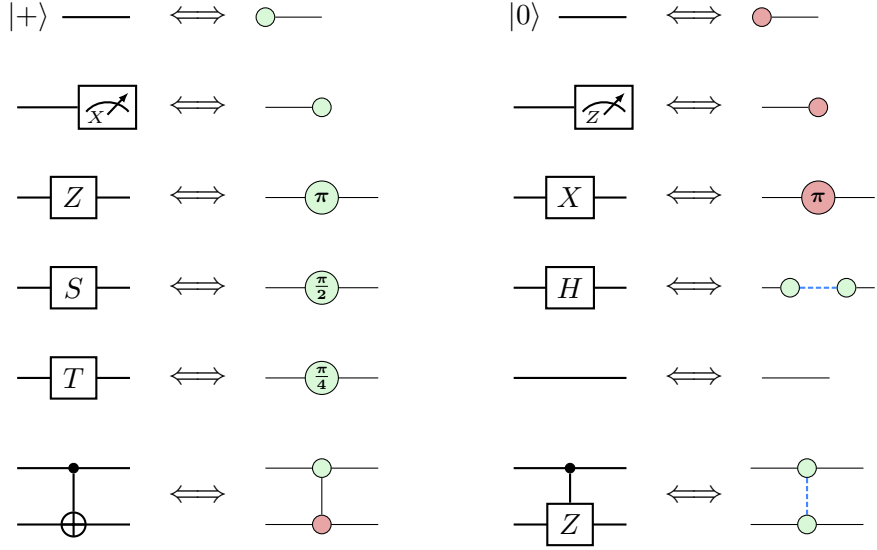

    \centering
    \tikzfig{circuit_zx_translation}
    \caption{Translation table for quantum circuits and ZX-diagrams.}
    \label{fig:quantum_circuit_zx_translation}
\end{figure}

The ZX-calculus has many different rules for transforming a ZX-diagram into an equivalent ZX-diagram.
We now introduce some fundamental rules upon which our results will be built.
The first one is a meta-rule: if two ZX-diagrams have the same open wires and the same set of nodes connected in the same manner, then these two diagrams are equivalent.
This rule is particularly useful as it allows us to treat a family of equivalent ZX-diagrams in a more abstract way by relying on undirected graphs (with additional parameters associated with the vertices).
The second rule, known as the identity rule, states that a wire is equivalent to a spider with no phase and two incident wires:
\begin{equation}
    \tikzfig{identity_rule}
\end{equation}
Finally, if two Z spiders are connected by at least one wire, then they can be fused together into a single spider, and the resulting phase is the sum of their individual phases:
\begin{equation}
    \tikzfig{Z_fusion_rule}
\end{equation}
This spider fusion rule also holds for X spiders, as every rule of the ZX-calculus remains true when swapping Z and X spiders: 
\begin{equation}
    \tikzfig{X_fusion_rule}
\end{equation}

\subsection{Lattice surgery}\label{sub:lattice_surgery}

Lattice surgery is a measurement-based method for performing fault-tolerant logical operations, originally developed for the surface code~\cite{horsman2012surface}.
The fundamental lattice surgery operations are the split and merge operations.
The split operation consists in splitting a logical qubit into several logical qubits, whereas the merge operation consists in merging several logical qubits into a single logical qubit.
For the surface code lattice surgery, the split and merge operations can be divided into two types: smooth or rough.
This distinction is based on the type of logical operator on which the operation acts non-trivially.
It has been shown that the ZX-calculus is a language for the surface code lattice surgery~\cite{de2020zx}, where the wires of a ZX-diagram are associated with logical qubits and the spiders are representing the lattice surgery operations acting on these logical qubits.
The red spiders are representing rough operations, while the green spiders are representing smooth operations.
The split operation is represented by a spider having one input wire and multiple output wires.
For example, the smooth and rough splits of a single logical qubit into two logical qubits are represented by a green and red spider respectively, with one input wire and two output wires:
\begin{center}
    \tikzfig{Z_split}
    \qquad\qquad\qquad
    \tikzfig{X_split}
\end{center}
The merge operation can be seen as the converse of the split operation.
It is represented by a spider having multiple input wires and one output wire.
However, unlike the split operation, the merge operation is non-deterministic.
Pauli errors, which are correlated to the outcome of the measurements involved in realizing this operation, may be introduced on all but one of the input wires.
For example, the smooth and rough merges of two logical qubits are represented by a green and red spider respectively, with two input wires and one output wire, and where a Pauli error may occur on one of the input wires:
\begin{center}
    \tikzfig{Z_merge}
    \qquad\qquad\qquad
    \tikzfig{X_merge}
\end{center}
where $s \in \{0, 1\}$.
The value of $s$ is known only once the merge has been realized, which makes the merge operation non-deterministic by nature.
Yet, some sequences of lattice surgery operations can be realize deterministically even if they include some non-deterministic merge operations.
To do so, a correction must be found for each of the potential Pauli errors.
A correction consists in modifying the subsequent operations to cancel out the effect of the Pauli error.
For instance, the CNOT gate can be implemented using lattice surgery operations as follows:
\begin{center}
    \tikzfig{CNOT_split_merge}
\end{center}
where $s \in \{0, 1\}$.
An algorithm for finding a correction strategy, called Pauli Fusion flow, for a given ZX-diagram representing lattice surgery operations was proposed in Reference~\cite{de2019pauli}.
In this work, we will assume that any Pauli error has a correction that can be realized on the output wires of the ZX-diagram.
This allows us to freely move the spiders within the ZX-diagram while guaranteeing a deterministic implementation of the sequence of lattice surgery operations it represents.
If this assumption doesn't hold for a given ZX-diagram which can be implemented deterministically, then we can always partition it into several ZX-diagrams, each of which satisfies this property.
This partitioning of a given ZX-diagram can for example be done by using the algorithm proposed in Reference~\cite{de2019pauli}.

In Section~\ref{sec:qubits_opt_lattice_surgery}, we introduce novel algorithms for optimizing the number of qubits required to implement a given ZX-diagram using lattice surgery operations, without connectivity constraints.
In particular, we demonstrate how we can transform a given ZX-diagram $\mathcal{D}$ into another equivalent ZX-diagram $\mathcal{D}'$ by using only the ZX-calculus rules presented in Subsection~\ref{sub:zx_calculus}, and such that $\mathcal{D}'$ can be implemented using lattice surgery operations with a minimal number of logical qubits.
We also prove the NP-hardness of this problem, and show how it relates to well-known graph-theoretical problems.
Then, in Section~\ref{sec:qubits_opt_quantum_circuits}, we demonstrate how these results can be used to optimize the number of qubits in a given quantum circuit.

\section{Hadamard gates degadgetization}\label{sec:hadamard_gates_degadgetization}

In this section, we present a method for optimizing the number of qubits within a quantum circuit based on the degadgetization of Hadamard gates.
We will consider circuits in which all Hadamard gates have been gadgetized.
For example, that is the case of the circuits in which the number of $T$ gates have been optimized by the algorithms of References~\cite{heyfron2018efficient, de2020fast}.
The operations composing this kind of circuits can be represented by diagonal Pauli rotations, classically controlled Clifford gates and a final Clifford operator.
A diagonal Pauli rotation $R_P(\alpha)$ acting on $n$ qubits is defined as 
\begin{equation}
    R_P(\alpha) = \exp(-i\alpha P/2)
\end{equation}
where $P \in \{I, Z\}^{\otimes n}$ is a Pauli product, $I$ is the identity matrix, and $Z$ is the Pauli $Z$ matrix.
We will use $P_i$ to denote the $i$th Pauli matrix of a Pauli product $P$.
A diagonal Pauli rotation can be represented by a phase gadget in ZX-calculus:
\begin{equation}
    \tikzfig{phase_gadget}
\end{equation}
We can notice that two phase gadgets necessarily commute.
Thus, if a circuit $C$ is solely composed of phase gadgets, then we can freely rearrange the order of its phase gadgets.

The Hadamard gate gadgetization procedure is presented in Figure~\ref{fig:gadgetization}.
It results in a $CZ_{a, b}$ gate acting on qubits $a$ and $b$ where $a$ is measured in the Pauli $X$ basis and $b$ is prepared in the $\ket{+}$ state.
To undo this Hadamard gate gadgetization, the following two conditions must be satisfied:
\begin{enumerate}
    \item There must be no gate between the $CZ_{a, b}$ gate and the measurement of qubit $a$.
    \item There must be no gate between the initialization of qubit $b$ and the $CZ_{a, b}$ gate.
\end{enumerate}
The problem of degadgetizing a maximal number of Hadamard gates in a quantum circuit composed of phase gadgets then consists in finding an ordering its phase gadgets and $CZ$ gates such that these conditions are satisfied for a maximal number of Hadamard gates gadgetized.
We can encode these conditions into a directed graph $\mathcal{G}$ where each vertex represents a phase gadget or a $CZ$ gate, and where the arcs represent the precedence constraints between the operations that must be respected for all the Hadamard gates to be degadgetized.
Then, the problem consists in finding an ordering of the vertices in $\mathcal{G}$ such that these constraints are satisfied for a maximum number of Hadamard gates gadgetized, which leads to the following problem statement.

\begin{problem}[Hadamard gates degadgetization]
    Let $\mathcal{P} = \{P^{(1)}, \ldots, P^{(m)}\}$ be a set of Pauli products acting on $n$ qubits such that $P^{(i)}_j \in \{I, Z\}$ for all $i, j$, and let $\mathcal{H} = \{(a_1, b_1), \ldots, (a_\ell, b_\ell)\}$ be a set of pairs of integers such that $(a_i, b_i) \in \mathbb{Z}_n^2$.
    Let $\mathcal{G}$ be a directed graph with edge set $V = V_\mathcal{H} \cup V_\mathcal{P}$, where $V_\mathcal{H} = \{h_i \mid (a_i, b_i) \in \mathcal{H} \}$, $V_\mathcal{P} = \{p_i \mid P^{(i)} \in \mathcal{P} \}$, and arc set $A = \{(h_i, h_j) \mid i \neq j, (a_i, b_j) \in \mathcal{H} \} \cup \{(p_i, h_j) \mid P^{(i)}_{a_j} \neq I, P^{(i)} \in \mathcal{P}, (a_j, b_j) \in \mathcal{H} \} \cup \{(h_j, p_i) \mid P^{(i)}_{b_j} \neq I, P^{(i)} \in \mathcal{P}, (a_j, b_j) \in \mathcal{H} \}$.
    Find a subset $X \subseteq V_\mathcal{H}$ of minimal size such that $\mathcal{G}[V \setminus X]$ is an induced directed acyclic graph.
\end{problem}

Let $X$ be a solution to the Hadamard gates degadgetization problem for a quantum circuit $C$ with diagonal Pauli rotations associated with the Pauli products $\mathcal{P} = \{P^{(1)}, \ldots, P^{(m)}\}$ and with $CZ$ gates induced by the gadgetization of Hadamard gates, which are associated with the pairs of qubits $\mathcal{H} = \{(a_1, b_1), \ldots, (a_\ell, b_\ell)\}$.
And let $\mathcal{G}$ be a directed graph with vertex set $V = V_\mathcal{H} \cup V_\mathcal{P}$ constructed from $\mathcal{P}$ and $\mathcal{H}$ as described in the problem statement.
A topological sort of the vertices in the induced directed acyclic graph $\mathcal{G}[V \setminus X]$ provides an ordering of the Pauli rotations and $CZ$ gates of the circuit $C$.
This ordering guarantees that the Hadamard gate degadgetization conditions are satisfied for all the Hadamard gates which are not associated with a vertex in $X$.
Therefore, minimizing the number of vertices in $X$ corresponds to maximizing the number of Hadamard gates degadgetized.

The subset of vertices $X$ is called a feedback vertex set of $\mathcal{G}$, as the induced directed graph $\mathcal{G}[V\setminus X]$ is acyclic.
Finding a minimum feedback vertex set is a well-known NP-hard problem~\cite{karp1972reducibility}.
Note that we are only considering a subset of all possible feedback vertex sets of $\mathcal{G}$, as $X$ must satisfy $X \subseteq V_\mathcal{H}$.
Let $\mathcal{G}'$ be a directed graph with vertex set $V_\mathcal{H}$ and arc set $A' = \{(h_i, h_j) \mid i \neq j, (a_i, b_j) \in \mathcal{H} \} \cup \{(h_i, h_j) \mid (h_i, p_j) \in A, (p_j, h_k) \in A\}$.
Notice that for any path in $\mathcal{G}$ between two vertices in $V_\mathcal{H}$ there exists a corresponding path in $\mathcal{G}'$ that traverses the same vertices in $V_\mathcal{H}$.
Because there is no arcs between the vertices $V_\mathcal{P} \subset V$ of $\mathcal{G}$, it follows that a feedback vertex set $X \subseteq V_\mathcal{H}$ for $\mathcal{G}'$ is also a feedback vertex set for $\mathcal{G}'$.

To summarize, the complete procedure for degadgetizing a maximal number of Hadamard gates in a quantum circuit $C$ is the following:
\begin{enumerate}
    \item Construct the graphs $\mathcal{G}$ and $\mathcal{G}'$ associated with $C$.
    \item Find a minimum feedback vertex set $X$ in $\mathcal{G}'$.
    \item Construct the circuit associated with a topological sort of the vertices of $\mathcal{G}[V \setminus X]$.
\end{enumerate}
An illustrative example of this procedure is provided in Figure~\ref{fig:degadgetization_example}.

\captionsetup[subfigure]{justification=justified, width=0.9\textwidth}
\begin{figure}[t]
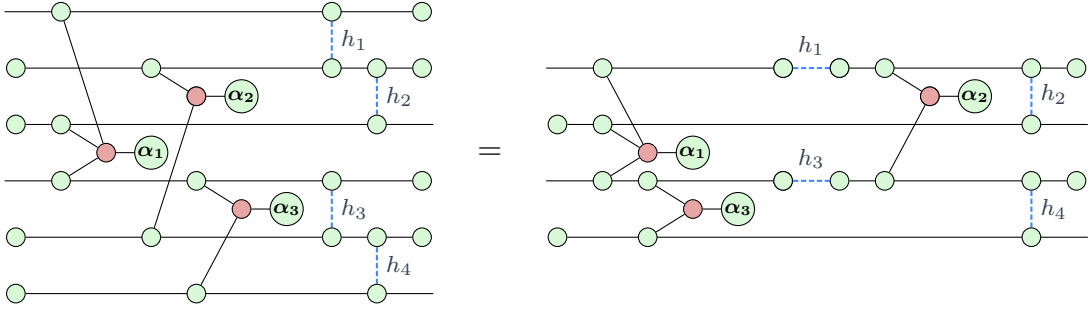
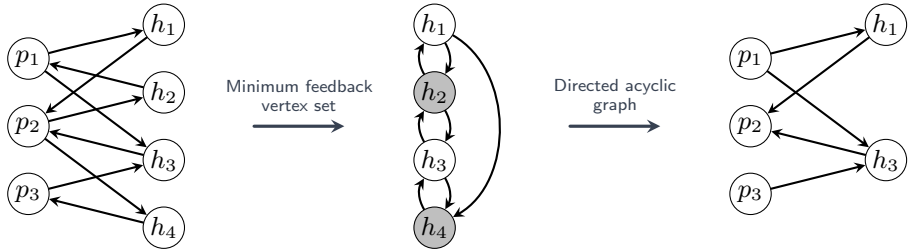

    \centering
    \begin{subfigure}{1\textwidth}
        \centering
        \tikzfig{degadgetization_0}
        \caption{The initial quantum circuit (represented in ZX-calculus with phase gadgets) in which Hadamard gates have been gadgetized, and an equivalent circuit in which $2$ Hadamard gates have been degadgetized.}
    \end{subfigure}\vspace{0.3cm}
    \begin{subfigure}{1\textwidth}
        \centering
        \tikzfig{degadgetization_1}
        \caption{Corresponding directed graph where the vertices $p_1$, $p_2$, $p_3$ are associated with the phase gadgets with angle $\alpha_1$, $\alpha_2$, $\alpha_3$ respectively, and the vertices $h_1, h_2, h_3, h_4$ are associated with the $CZ$ gates.
        The minimum feedback vertex set, here corresponding to the set of vertices $\{h_2, h_4\}$, indicates the Hadamard gates that will not be degadgetized.
        A topological sorting of the resulting directed acyclic graph gives the order of the phase gadgets and the Hadamard gates in the optimized circuit.}
    \end{subfigure}
    \caption{Example demonstrating the procedure used for degadgetizing a maximal number of Hadamard gates in a quantum circuit.}
    \label{fig:degadgetization_example}
\end{figure}
\captionsetup[subfigure]{justification=centering, width=0.9\textwidth}

We now prove the NP-hardness of the Hadamard gates degadgetization problem by reducing the minimum directed feedback vertex set problem to this problem.

\begin{theorem}
    The Hadamard gate degadgetization problem is NP-hard.
\end{theorem}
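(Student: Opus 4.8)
The plan is to reduce the minimum directed feedback vertex set problem (DFVS) --- finding a minimum set of vertices whose deletion makes a directed graph acyclic, which is NP-hard by~\cite{karp1972reducibility} --- to the Hadamard gates degadgetization problem. We may take the DFVS instance to be a simple directed graph $G = (V_G, A_G)$, since DFVS remains NP-hard on this class. The goal is to build, in polynomial time, a set $\mathcal{P}$ of Pauli products and a set $\mathcal{H}$ of pairs whose associated graph $\mathcal{G}$ (equivalently $\mathcal{G}'$) is, after relabelling its $V_\mathcal{H}$ vertices, a copy of $G$, so that a minimum solution $X \subseteq V_\mathcal{H}$ of the resulting instance is exactly a minimum feedback vertex set of $G$.

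\emph{Construction.} Write $V_G = \{v_1, \ldots, v_N\}$ and use $n = 2N$ qubits. Put $a_i = i$ and $b_i = N + i$ for each $i$, so that the $2N$ integers $a_1, \ldots, a_N, b_1, \ldots, b_N$ are pairwise distinct, and let $\mathcal{H} = \{(a_1,b_1), \ldots, (a_N,b_N)\}$; thus $V_\mathcal{H} = \{h_1, \ldots, h_N\}$ is in bijection with $V_G$ via $v_i \mapsto h_i$. For each arc $(v_i, v_j) \in A_G$ introduce one Pauli product $P^{(i,j)} \in \{I, Z\}^{\otimes n}$ whose only non-identity entries are $P^{(i,j)}_{b_i} = Z$ and $P^{(i,j)}_{a_j} = Z$, and let $\mathcal{P}$ be the collection of these products (pairwise distinct, since $G$ is simple and the $a$'s and $b$'s are distinct). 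This is clearly polynomial in the size of $G$.

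\emph{Correctness.} I would check that the arc set $A'$ of $\mathcal{G}'$ equals $\{(h_i, h_j) : (v_i, v_j) \in A_G\}$. A ``direct'' arc $(h_i, h_j)$ with $i \neq j$ would require $(a_i, b_j) \in \mathcal{H}$, i.e.\ $(a_i, b_j) = (a_k, b_k)$ for some $k$, which forces $i = k = j$ by distinctness; so there are none. For arcs routed through a phase-gadget vertex, $(h_i, p_{(k,l)}) \in A$ iff $P^{(k,l)}_{b_i} \neq I$ iff $i = k$, and $(p_{(k,l)}, h_j) \in A$ iff $P^{(k,l)}_{a_j} \neq I$ iff $j = l$; hence $p_{(k,l)}$ contributes precisely the arc $h_k \to h_l$, and over all of $\mathcal{P}$ we recover exactly the arcs of $G$. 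So $\mathcal{G}' \cong G$. By the observation preceding the statement, a set $X \subseteq V_\mathcal{H}$ is a feedback vertex set of $\mathcal{G}$ iff it is one of $\mathcal{G}'$, hence of $G$; therefore the minimum solution to the constructed Hadamard gates degadgetization instance has the same size as a minimum feedback vertex set of $G$, and NP-hardness follows (a polynomial-time algorithm for the former would decide DFVS).

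\emph{Main obstacle.} The only delicate point is verifying that the construction creates no spurious arcs in $\mathcal{G}$ and produces a legitimate instance (in particular $a_i \neq b_i$, and $\mathcal{P}$ is a genuine set) --- all of this is secured by the pairwise distinctness of the labels $a_i, b_i$. The restriction to simple $G$ is merely for convenience: a self-loop at $v_i$ would map to the length-two cycle $h_i \to p_{(i,i)} \to h_i$, forcing $h_i$ into every solution exactly as the self-loop forces $v_i$ into every feedback vertex set, and parallel arcs simply duplicate a phase gadget without affecting either quantity. I do not anticipate any difficulty beyond this routine bookkeeping.
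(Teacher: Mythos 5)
Your proposal is correct and follows essentially the same reduction as the paper: each arc $(u,v)$ of the DFVS instance is encoded as a phase gadget with $Z$ support on the initialization qubit $b_u$ of one gadgetized Hadamard and the measurement qubit $a_v$ of another, so that $\mathcal{G}'$ is isomorphic to the input graph. Your extra bookkeeping (no spurious direct $(h_i,h_j)$ arcs, handling of self-loops and parallel arcs) only makes explicit what the paper leaves implicit.
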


\begin{proof}
    Let $\mathcal{G}$ be a directed graph with vertex set $V$ and arc set $A$.
    Let $\mathcal{P} = \{P^{(u,v)} \mid P^{(u, v)}_{u + \lvert V \rvert} = P^{(u, v)}_v = Z, P^{(u, v)}_w = I,  (u, v) \in A, w \not \in \{u, v\} \}$ and let $\mathcal{H} = \{(u, u + \lvert V \rvert) \mid u \in V \}$.
    Let $X$ be a solution to the Hadamard gates degadgetization problem for $\mathcal{P}$ and $\mathcal{H}$.
    Then, $X$ is a minimum feedback vertex set of the graph $\mathcal{G}'$ with vertex set $V' = \{h_u \mid u \in V\}$ and arc set $A' = \{(h_u, h_v) \mid P^{(u,v)} \in \mathcal{P} \} = \{(h_u, h_v) \mid (u, v) \in A \}$.
    The directed graph $\mathcal{G}'$ is isomorphic to the directed graph $\mathcal{G}$.
    Since $X$ is a minimum feedback vertex set for $\mathcal{G}'$, it follows that $X$ is also a feedback vertex set for $\mathcal{G}$.
    Thus, the Hadamard gates degadgetization problem is at least as hard as the minimum directed feedback vertex set, which is an NP-hard problem~\cite{karp1972reducibility}.
\end{proof}

\section{Qubit-count optimization for lattice surgery using ZX-calculus}\label{sec:qubits_opt_lattice_surgery}

In this section we present a graph-theoretical approach to the qubit-count optimization problem by relying on the ZX-calculus as a language for lattice surgery.
The proposed methods can be used to optimize the number of logical qubits required to implement the lattice surgery operations represented by a given ZX-diagram, without any connectivity constraints.
These methods can also be used to optimize the number of qubits in a given quantum circuit, as explained in Section~\ref{sec:qubits_opt_quantum_circuits}.
In Subsection~\ref{sub:qubits_opt_vertex_ordering}, we show how we can optimize the number of qubits by rearranging the order of the spiders in the ZX-diagram.
Then, in Subsection~\ref{sub:qubits_opt_fusion_rule}, we show how we can extend this result by incorporating the spider fusion rule to better optimize the number of qubits.

\subsection{Qubit-count optimization via spider ordering}\label{sub:qubits_opt_vertex_ordering}

\begin{figure}[t]
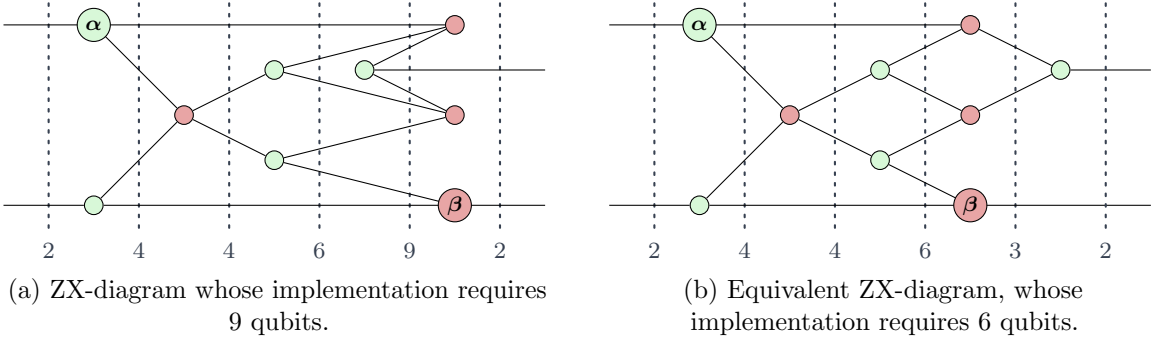

    \centering
    \begin{subfigure}{0.48\textwidth}
        \centering
        \tikzfig{cutwidth_example_0}
        \caption{ZX-diagram whose implementation requires $9$ qubits.}
    \end{subfigure}
    \begin{subfigure}{0.48\textwidth}
        \centering
        \tikzfig{cutwidth_example_1}
        \caption{Equivalent ZX-diagram, whose implementation requires $6$ qubits.}
    \end{subfigure}
    \caption{Example demonstrating how rearranging the order of spiders in a given ZX-diagrams can reduce the number of logical qubits needed for its implementation.
        The ZX-diagrams are annotated with the required number of logical qubits at each step to implement them with the lattice surgery operations they represent.}
    \label{fig:spider_ordering_example}
\end{figure}

The operation realized by a ZX-diagram is preserved even when its spiders are moved and its wires are bent: only connectivity matters.
For example, the following diagrams all correspond to the CNOT gate, despite representing different sequences of lattice surgery operations:
\begin{equation}
    \tikzfig{zx_CNOT_equivalence}
\end{equation}
The number of logical qubits required to implement a given ZX-diagram using lattice surgery operations is equal to the maximum number of wires in any vertical cut of the diagram.
Thus, rearranging the order of the spiders of a ZX-diagram or bending (or unbending) its wires can reduce the number of qubits required to implement the associated sequence of lattice surgery operations.
An example is provided in Figure~\ref{fig:spider_ordering_example}.
Then, our objective is to transform a given ZX-diagram by moving its spiders and bending (or unbending) its wires such that the maximum number of wires over any vertical cut of the diagram is minimized.
To formalize this problem in a graph-theoretical manner, we define the signature of a ZX-diagram as follows.

\begin{definition}[ZX-diagram signature]
    Let $\mathcal{D}$ be a ZX-diagram, and let $\mathcal{G}_\mathcal{D}$ be the signature graph of $\mathcal{D}$ which is an undirected graph such that
    \begin{itemize}
        \item it contains a vertex for each spider of $\mathcal{D}$, for each input wire of $\mathcal{D}$ ($\mathcal{I}$ denotes this set of vertices) and for each output wire of $\mathcal{D}$ ($\mathcal{O}$ denotes this set of vertices),
        \item and it has an edge between the vertices $u$ and $v$ for all spiders $u$ and $v$ connected in $\mathcal{D}$, for all $u \in \mathcal{I}$ and spider $v$ incident to the input wire associated with $u$ in $\mathcal{D}$ and for all $u \in \mathcal{O}$ and spider $v$ incident to the output wire associated with $u$ in $\mathcal{D}$.
    \end{itemize}
    Then, the triple $(\mathcal{G}_\mathcal{D}, \mathcal{I}, \mathcal{O})$ is the signature associated with $\mathcal{D}$.
\end{definition}

And we will use the following definition of vertex ordering for a given graph to incorporate the notion of moving the spiders of a ZX-diagram in our problem.

\begin{definition}[Vertex ordering]
    Let $\mathcal{G}$ be a graph with vertex set $V$.
    A function $f$ is an ordering of the vertices $V$ if and only if $f(u) \in \{1, 2, \ldots, \lvert V \rvert\}$ for all $u \in V$ and $f(u) \neq f(v)$ for all $u,v \in V$ such that $u \neq v$.
\end{definition}

We will see that our qubit-count optimization problem is closely related to the graph-theoretical concept of cutwidth, defined as follows.

\begin{definition}[Cutwidth]
    Let $\mathcal{G}=(V,E)$ be a graph with vertex set $V$ and edge set $E$.
    The cutwidth of a vertex $v \in V$ with respect to a vertex ordering $f$, denoted by $cw_f(v)$, is defined as:
    \begin{equation}
        cw_f(v) = \lvert \{(u, w) \mid f(u) \leq f(v) < f(w), (u, w) \in E \} \rvert.
    \end{equation}
    The cutwidth of $\mathcal{G}$ with respect to $f$, denoted by $cw_f(\mathcal{G})$, is defined as: 
    \begin{equation}
        cw_f(\mathcal{G}) = \max_{v\in V} cw_f(v).
    \end{equation}
    And the cutwidth of $\mathcal{G}$, denoted by $cw(\mathcal{G})$, is defined as the minimum $cw_f(\mathcal{G})$ value over all possible orderings:
    \begin{equation}
        cw(\mathcal{G}) = \min_{f \in S_V} cw_f(\mathcal{G}) = \min_{f \in S_V} \max_{v\in V} cw_f(v)
    \end{equation}
    where $S_V$ is the set of all orderings of the vertices in $V$.
\end{definition}

The cutwidth problem consists in finding the cutwidth $cw(\mathcal{G})$ of a given graph $\mathcal{G}$.
This problem is known to be NP-hard~\cite{gavril1977some}.

We now show how rearranging the spiders of a ZX-diagram to minimize the number of qubits required to implement it using lattice surgery operations relates to the cutwidth problem.
Let $(\mathcal{G}_\mathcal{D}, \mathcal{I}, \mathcal{O})$ be the signature of a ZX-diagram $\mathcal{D}$ and $V$ be the vertices of $\mathcal{G}_\mathcal{D}$.
Choosing an ordering $f$ of the vertices $V$ of $\mathcal{G}_\mathcal{D}$ can be seen as moving the spider of the ZX-diagram $\mathcal{D}$, provided that $f(u) < f(v) < f(w)$ for all $u \in \mathcal{I}$, $v \in V\setminus \{\mathcal{I} \cup \mathcal{O}\}$, and $w \in \mathcal{O}$ (as a spider cannot precede an input wire or succeed an output wire).
Indeed, we can construct a ZX-diagram $\mathcal{D}'$ from $f$ such that $\mathcal{D}'$ contains the same nodes as $\mathcal{D}$, connected in the same manner (and connected to the same input and output wires), but where the spiders of $\mathcal{D}'$ are ordered according to $f$.
Then, it follows that the ZX-diagrams $\mathcal{D}'$ and $\mathcal{D}$ are equivalent, but the number of qubits required to implement $\mathcal{D}'$ in lattice surgery is equal to $cw_f(\mathcal{G_\mathcal{D}})$.
Thus, we want to find an ordering $f$ such that $cw_f(\mathcal{G_\mathcal{D}})$ is minimized, and where $f$ satisfies $f(u) < f(v) < f(w)$ for all $u \in \mathcal{I}$, $v \in V\setminus \{\mathcal{I} \cup \mathcal{O}\}$, and $w \in \mathcal{O}$.
This problem corresponds to the problem of finding an ordering $f$ of the vertices of $\mathcal{G}_\mathcal{D}$ with minimum cutwidth, but where the first and last vertices of the ordering are fixed.
We refer to this specific problem as the fixed-endvertices cutwidth problem.

\begin{problem}[Fixed-endvertices cutwidth]
    Let $\mathcal{G}=(V,E)$ be a graph with vertex set $V$ and edge set $E$ and let $u, w \in V$ such that $u \neq w$.
    Find an ordering $f$ of the vertices $V$ satisfying $f(u) \leq f(v) \leq f(w)$ for all $v \in V$ and such that $cw_f(\mathcal{G})$ is minimized.
\end{problem}

An illustrative example demonstrating how the fixed-endvertices cutwidth problem can be used to minimize the maximum number of wires in any vertical cut of a given ZX-diagram is provided in Figure~\ref{fig:cutwidth_example}.

\begin{figure}[t]
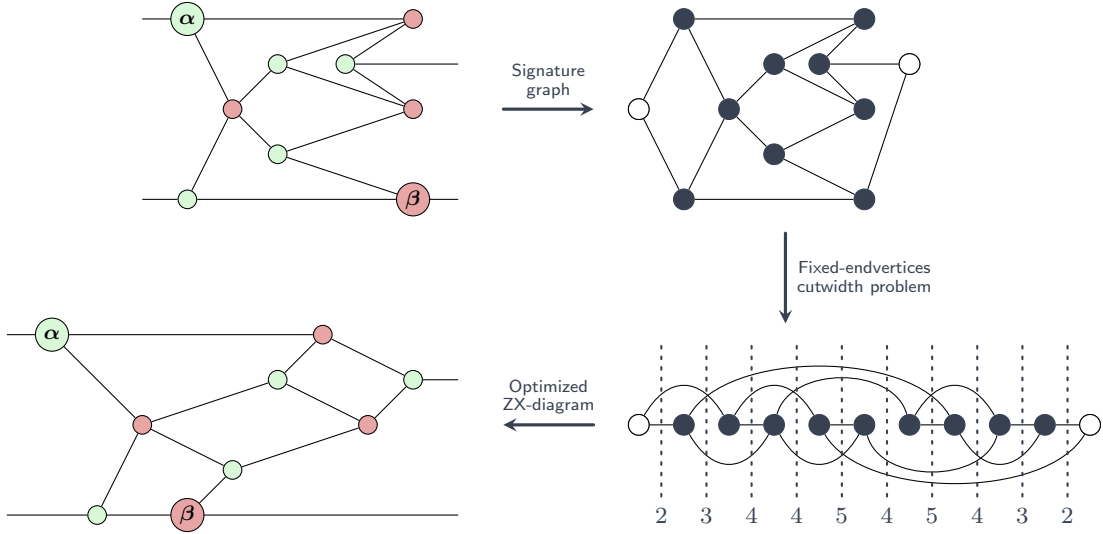

    \centering
    \tikzfig{cutwidth_example_2}
    \caption{Example demonstrating the procedure used for finding an optimal arrangement of the spiders within a given ZX-diagram to minimize the number of logical qubits required for its implementation via lattice surgery operations.
    We first compute the signature graph $\mathcal{G}_\mathcal{D}$ associated with the ZX-diagram.
    Then, we solve the fixed-endvertices cutwidth problem for $\mathcal{G}_\mathcal{D}$.
    Finally, we rearrange the order of the spiders in the ZX-diagram based on the ordering obtained.
    We obtain an equivalent ZX-diagram which can be implemented using $5$ logical qubits instead of $9$ for the initial ZX-diagram.
}
    \label{fig:cutwidth_example}
\end{figure}

We now prove the following theorem, which states that the problem of minimizing the maximum number of wires in any vertical cut of a given ZX-diagram by moving its spiders and bending (or unbending) its wires is equivalent to the fixed-endvertices cutwidth problem.

\begin{theorem}\label{thm:cutwidth_equivalence}
    Let $(\mathcal{G}_\mathcal{D}, \mathcal{I}, \mathcal{O})$ be the signature of a given ZX-diagram $\mathcal{D}$.
    And let $\mathcal{G}_\mathcal{D}'$ be constructed from $\mathcal{G}_\mathcal{D}$ by merging its subset of vertices $\mathcal{I}$ into a single vertex $u$ and its subset of vertices $\mathcal{O}$ into a single vertex $w$.
    Then, solving the fixed-endvertices problem for $\mathcal{G}_\mathcal{D}'$ with endvertices $u$ and $w$ is equivalent to solving the problem of finding a ZX-diagram $\mathcal{D}'$ which doesn't have any vertical wire, which can be transformed into $\mathcal{D}$ only by moving its spiders and bending (or unbending) its wires, and such that the maximum number of wires in any vertical cut of $\mathcal{D}'$ is minimized.
\end{theorem}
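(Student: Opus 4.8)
The plan is to set up an explicit, objective-preserving correspondence between the two optimization problems and then observe that it makes them equivalent. On one side we have the orderings $f$ of $V(\mathcal{G}_\mathcal{D}')$ with $f(u)\le f(v)\le f(w)$ for all $v$, scored by $cw_f(\mathcal{G}_\mathcal{D}')$; on the other side we have the ZX-diagrams $\mathcal{D}'$ with no vertical wire, reachable from $\mathcal{D}$ by moving spiders and bending (or unbending) wires, scored by the maximum number of wires crossing a vertical cut. Before giving the correspondence I would justify the merging step: all input wires of $\mathcal{D}$ leave the common far-left boundary and all output wires meet the common far-right boundary, so a vertical cut is unaffected by the relative order of the input (resp.\ output) wires or by a cup between two inputs / a cap between two outputs; collapsing $\mathcal{I}$ to the single vertex $u$ and $\mathcal{O}$ to the single vertex $w$ turns precisely these immaterial data into self-loops and parallel edges at $u$ and $w$. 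Accordingly I treat $\mathcal{G}_\mathcal{D}'$ as a multigraph and discard self-loops, using that $cw_f$ is insensitive to both; this is also exactly why one cannot simply apply the fixed-endvertices cutwidth problem to $\mathcal{G}_\mathcal{D}$ itself with two of its $\mathcal{I}\cup\mathcal{O}$ vertices as endvertices.

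\textbf{From an ordering to a diagram.} Given such an $f$, with $1=f(u)<\cdots<f(w)=N$, I would build $\mathcal{D}'$ by placing the spider with $f$-value $k$ at horizontal coordinate $k$ for $k=2,\dots,N-1$, leaving all input wires open at coordinate $1$ and all output wires open at coordinate $N$, connecting nodes exactly as in $\mathcal{D}$, and routing every wire so that its horizontal coordinate is monotone (possible by bending wires, and creating no vertical wire). By the ZX-calculus meta-rule (same open wires, same nodes connected in the same way), $\mathcal{D}'$ is equivalent to $\mathcal{D}$ and is obtained from it only by moving spiders and bending wires. The vertical cut between coordinates $k$ and $k+1$ crosses exactly the wires joining a node at coordinate $\le k$ to a node at coordinate $>k$, i.e., using that $u$ lies at $1\le k$ and carries all inputs while $w$ lies at $N>k$ and carries all outputs, exactly the edges $(a,b)$ of $\mathcal{G}_\mathcal{D}'$ with $f(a)\le k<f(b)$; hence it has size $cw_f(v)$ for the vertex $v$ with $f(v)=k$. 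Therefore the maximum vertical-cut size of $\mathcal{D}'$ equals $cw_f(\mathcal{G}_\mathcal{D}')$.

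\textbf{From a diagram to an ordering.} Conversely, let $\mathcal{D}'$ be any ZX-diagram with no vertical wire obtained from $\mathcal{D}$ by moving spiders and bending wires. These moves leave the set of nodes, their connectivity, and the input/output incidences unchanged, so $\mathcal{D}'$ has the same signature graph, hence the same merged graph $\mathcal{G}_\mathcal{D}'$, as $\mathcal{D}$. Because no wire of $\mathcal{D}'$ is vertical, its spiders together with the far-left input boundary and the far-right output boundary occupy pairwise distinct horizontal coordinates; reading them left to right yields an ordering $f$ of $V(\mathcal{G}_\mathcal{D}')$ in which the input boundary comes first and the output boundary last, i.e.\ $f(u)\le f(v)\le f(w)$ for all $v$ (no spider can precede an input wire or follow an output wire). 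Re-routing each wire to be horizontally monotone, which does not change the diagram, and repeating the cut count above shows that the maximum vertical-cut size of $\mathcal{D}'$ equals $cw_f(\mathcal{G}_\mathcal{D}')$. The two constructions are mutually inverse up to the immaterial choices noted above and each is direct, so an optimal solution on either side is transformed into an optimal solution on the other; in particular the two optimal values coincide, which is the asserted equivalence.

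\textbf{Main obstacle.} The only step needing genuine care is the exact identification of ``number of wires crossing a vertical cut of the drawn diagram'' with ``$cw_f$ of the abstract multigraph''. This is what forces: (i) the merging of $\mathcal{I}$ and $\mathcal{O}$, without which an arbitrary order among boundary wires, or a cup/cap on the boundary, would be counted spuriously; (ii) the convention that $\mathcal{G}_\mathcal{D}'$ is a multigraph with self-loops discarded; and (iii) the observation that bending and unbending wires provide exactly enough freedom to realise, for any admissible vertex ordering, a drawing in which every wire is horizontally monotone and no wire is vertical — which is immediate from the meta-rule, since in ZX-calculus only connectivity matters. Everything else rests on the correspondence already established before the statement between orderings of $\mathcal{G}_\mathcal{D}$ and re-orderings of the spiders of $\mathcal{D}$.
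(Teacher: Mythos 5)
Your forward direction (ordering $\to$ diagram) matches the paper's argument, but your reverse direction contains a genuine gap. You claim that ``because no wire of $\mathcal{D}'$ is vertical, its spiders together with the input and output boundaries occupy pairwise distinct horizontal coordinates.'' This inference is false: two spiders can share the same horizontal coordinate without creating a vertical wire, either because they are not adjacent or because any wire between them is bent (the ``no vertical wire'' hypothesis only excludes straight vertical segments, not vertical alignment of nodes). Consequently, reading the spiders ``left to right'' yields only a partial order with ties, and your construction of the ordering $f$ is not well defined. Breaking ties arbitrarily is not harmless: inserting a new vertical cut between two previously aligned spiders can cross more wires than any cut of the original drawing, so the resulting $f$ could have $cw_f(\mathcal{G}_\mathcal{D}')$ strictly larger than the optimal value $\kappa$, which would break the equality of optima you need.

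This missing step is precisely where the paper invests its technical effort. Given a set $U$ of vertically aligned spiders, the paper selects a spider $u_i\in U$ whose number of input or output wires is at most the number of input wires and at most the number of output wires of every other spider in $U$, and slides it just before $U\setminus\{u_i\}$ if its output degree does not exceed its input degree, and just after otherwise; a counting argument then shows the newly created cut has size at most the cut immediately before (resp.\ after) $U$, hence at most $\kappa$, while all other cuts are unchanged. Iterating over all aligned groups produces a strictly totally ordered diagram with maximum cut still $\kappa$, from which the admissible ordering $f$ with $cw_f(\mathcal{G}_\mathcal{D}')=\kappa$ is read off. Your proposal needs this (or an equivalent tie-breaking argument that provably does not increase the maximum cut); the secondary points you flag --- merging $\mathcal{I}$ and $\mathcal{O}$, discarding self-loops, and re-routing wires monotonically --- are handled correctly and agree in spirit with the paper.
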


\begin{proof}
    We prove the equivalence between these two problems by demonstrating that an optimal solution to either problem provides an equivalent solution to the other problem.

    Let $f_{u, w}$ be a solution to the fixed-endvertices cutwidth problem for $\mathcal{G}_{\mathcal{D}}'$ where $u$ and $w$ are the fixed first and last vertices of the ordering $f_{u, w}$.
    And let $\mathcal{D}'$ be a ZX-diagram constructed from $\mathcal{D}$ by reordering its spiders according to the ordering $f_{u, w}$ and by straightening its wires.
    Since the spiders of $\mathcal{D}'$ are ordered in the same way as the vertices in $\mathcal{G}_{\mathcal{D}}'$ with respect to $f_{u, w}$, it follows that the maximum number of wires in any vertical cut of $\mathcal{D}'$ is equal to $cw_{f_{u,w}}(\mathcal{G}_{\mathcal{D}}')$.

    Let $\mathcal{D}'$ be a ZX-diagram which doesn't have any vertical wire and which can be transformed into $\mathcal{D}$ only by moving its spiders and bending (or unbending) its wires, and such that the maximum number of wires in any vertical cut of $\mathcal{D}'$, noted $\kappa$, is minimized.
    We first show that we can modify $\mathcal{D}'$ to obtain a strict total order of its spiders without increasing the maximum number of wires in any vertical cut of the diagram.
    Let $U = \{u_1, \ldots, u_n\}$ be a set of spiders in $\mathcal{D}'$ that are vertically aligned.
    And let $\alpha \leq \kappa$ be the number of wires in the vertical cut of $\mathcal{D}'$ positioned just before the spiders in $U$, and $\beta \leq \kappa$ be the number of wires in the vertical cut of $\mathcal{D}'$ positioned just after the spiders in $U$.
    Let $u_i \in U$ be the spider such that its number of input or output wires is both lower or equal to the number of input wires of every other spider in $U$ and the number of output wires of every other spider in $U$.
    If the number of output wires of $u_i$ is lower than its number of input wires, then we modify the diagram $\mathcal{D}'$ by placing the spider $u_i$ just before the set of spiders $U \setminus \{u_i\}$.
    Otherwise, we modify the diagram $\mathcal{D}'$ by placing the spider $u_i$ just after the set of spiders $U \setminus \{u_i\}$.
    Let $\gamma$ be the number of wires in the vertical cut positioned between the spider $u_i$ and the spiders $U \setminus \{u_i\}$.
    If $u_i$ is placed before the set of spiders $U \setminus \{u_i\}$, then we have $\gamma \leq \alpha$, which implies that $\gamma \leq \kappa$.
    Otherwise, if $u_i$ is placed after the set of spiders $U \setminus \{u_i\}$, then we have $\gamma \leq \beta$, which also implies that $\gamma \leq \kappa$.
    The number of wires in all the other vertical cuts of the diagram remains the same.
    Then, we can repeat this process for every set of vertically aligned spiders in $\mathcal{D}'$ to obtain an equivalent ZX-diagram $\tilde{\mathcal{D}}'$ in which no spiders are vertically aligned, and such that the maximum number of wires in any vertical cut of the diagram is equal to $\kappa$.
    Moreover, note that the ZX-diagram obtained by straightening the wires of $\tilde{\mathcal{D}}'$ also has a maximum number of wires in any vertical cut of the diagram equal to $\kappa$.
    Let $f'_{u,w}$ be the ordering of the vertices in $\mathcal{G}_{\mathcal{D}}'$ corresponding to the ordering of the spiders in $\tilde{\mathcal{D}}'$, then we have $cw_{f'_{u,w}}(\mathcal{G}_{\mathcal{D}}') = \kappa$.
\end{proof}

The close relation between the cutwidth problem and the fixed-endvertices cutwidth problem can be leveraged to prove the NP-hardness of the fixed-endvertices cutwidth problem as follows.

\begin{theorem}\label{thm:cutwidth_np_hard}
    The fixed-endvertices cutwidth problem is NP-hard.
\end{theorem}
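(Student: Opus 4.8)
The plan is to reduce the ordinary cutwidth problem --- known to be NP-hard~\cite{gavril1977some} --- to the fixed-endvertices cutwidth problem, the reduction amounting to nothing more than padding the input graph with two isolated vertices that will play the role of the forced endvertices. Concretely, given a graph $\mathcal{G} = (V, E)$ whose cutwidth we wish to determine, I would construct in polynomial time the graph $\mathcal{G}' = (V \cup \{u, w\}, E)$ obtained by adding two new vertices $u$ and $w$ and no new edges, and then consider the fixed-endvertices cutwidth problem for $\mathcal{G}'$ with the two distinguished vertices $u$ and $w$.

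The key claim to establish is that the optimal value of the fixed-endvertices cutwidth problem on $(\mathcal{G}', u, w)$ equals $cw(\mathcal{G})$. For one direction, given any ordering $g$ of $V$, I would extend it to an ordering $f$ of $V \cup \{u, w\}$ by setting $f(u) = 1$, $f(w) = \lvert V \rvert + 2$, and $f(v) = g(v) + 1$ for $v \in V$; this $f$ is feasible for the fixed-endvertices problem. Since $u$ is isolated and sits at the first position, $cw_f(u) = 0$, and since $w$ is the last vertex, $cw_f(w) = 0$; moreover, because every edge of $\mathcal{G}'$ lies inside $V$ and $f$ preserves the relative order of the vertices of $V$, one checks directly that $cw_f(v) = cw_g(v)$ for every $v \in V$, hence $cw_f(\mathcal{G}') = cw_g(\mathcal{G})$. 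Conversely, any feasible ordering $f$ for $(\mathcal{G}', u, w)$ must satisfy $f(u) = 1$ and $f(w) = \lvert V \rvert + 2$, since $f$ is a bijection onto $\{1, \ldots, \lvert V \rvert + 2\}$ with $u$ minimal and $w$ maximal; restricting $f$ to $V$ and shifting down by one yields an ordering $g$ of $V$, and the same computation gives $cw_g(\mathcal{G}) = cw_f(\mathcal{G}')$. Taking the minimum over orderings on both sides yields the claimed equality of optimal values.

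With this equality in hand, the conclusion is immediate: a polynomial-time algorithm for the fixed-endvertices cutwidth problem would yield a polynomial-time algorithm computing $cw(\mathcal{G})$ for arbitrary $\mathcal{G}$, so the fixed-endvertices cutwidth problem is NP-hard. I do not anticipate a genuine obstacle here; the reduction is trivial and the only point demanding care is the bookkeeping that the two padding vertices contribute nothing to the cutwidth, which holds because they are isolated and pinned to the two ends of the ordering (and the cut immediately after the last vertex is empty in any case). Note also that no connectivity caveat is needed, since the cutwidth problem is already NP-hard for general graphs, so introducing isolated vertices is harmless.
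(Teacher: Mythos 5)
Your proposal is correct, but it takes a genuinely different route from the paper. The paper's proof is a Turing reduction on the original graph: it observes that every ordering has some first vertex $u$ and some last vertex $w$, so $cw(\mathcal{G}) = \min_{u \neq w} cw_{f_{u,w}}(\mathcal{G})$, and therefore one can compute the cutwidth by solving the fixed-endvertices problem once for each of the $O(\lvert V \rvert^2)$ pairs of endvertices and taking the minimum. You instead pad the graph with two fresh isolated vertices pinned to the two ends and show that a \emph{single} call to the fixed-endvertices solver on $(\mathcal{G}', u, w)$ returns exactly $cw(\mathcal{G})$; your bookkeeping that the isolated endvertices contribute zero to every cut and that feasibility forces $f(u)=1$, $f(w)=\lvert V\rvert + 2$ is sound, and the problem statement in the paper imposes no connectivity requirement, so the padding is legitimate. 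The trade-off: your construction is essentially a many-one reduction (one oracle call, and it transfers hardness of the decision version directly), whereas the paper's argument leaves the instance untouched but needs polynomially many calls; both rest on the same NP-hardness of the ordinary cutwidth problem.
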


\begin{proof}
    Let $\mathcal{G}=(V, E)$ be a graph with vertex set $V$ and edge set $E$.
    And let $f_{u, w}$ be an optimal solution to the fixed-endvertices for $\mathcal{G}$ where $u$ and $w$ are the fixed first and last vertices of the ordering $f_{u, w}$.
    Then, by definition, the cutwidth of $\mathcal{G}$ satisfies
    \begin{equation}\label{eq:cutwidth_red}
        cw(\mathcal{G}) = \min_{\substack{u, w \in V\\u\neq w}} cw_{f_{u, w}}(\mathcal{G}).
    \end{equation}
    Therefore, we can solve the cutwidth problem by solving the fixed-endvertices cutwidth problem a polynomial number of times, and then use these solutions to compute the cutwidth of $\mathcal{G}$ via Equation~\ref{eq:cutwidth_red} in polynomial time.
    Thus, the fixed-endvertices cutwidth problem is at least as hard as the cutwidth problem, which is an NP-hard problem~\cite{gavril1977some}.
\end{proof}

From Theorem~\ref{thm:cutwidth_equivalence} and Theorem~\ref{thm:cutwidth_np_hard}, it follows that the problem of minimizing the number of logical qubits required to implement a given ZX-diagram using lattice surgery operations by moving its spiders and bending (or unbending) its wires is an NP-hard problem.
In order to solve this problem more efficiently, it can be useful to first apply some transformation of the initial ZX-diagram to minimize its number of spiders, for instance by using the spider fusion rule presented in Section~\ref{sub:zx_calculus}.
Also, spiders having only two incident wires don't have any impact on the optimization problem.
Each one of these spiders can be replaced by a single wire and inserted back on their associated wire in the optimized ZX-diagram without altering the obtained solution.

\subsection{Qubit-count optimization using the fusion rule}\label{sub:qubits_opt_fusion_rule}

The spider fusion rule, presented in Section~\ref{sub:zx_calculus}, can enable significant reductions in the number of qubits required to implement a given ZX-diagram through lattice surgery operations.
The most basic example is for implementing a multi-target CNOT gate in lattice surgery, without connectivity constraints.
It can either be done with a constant depth but at a linear cost in the number of logical qubits, or, at the extreme opposite, it can be done with only one intermediate logical qubit but at the expense of a linear depth with respect to the number of target qubits:
\begin{equation}
    \tikzfig{zx_CNOT_multitarget}
\end{equation}

A simple approach for exploiting this kind of optimizations could be to use the spider fusion rule to replace each spider of the diagram by a chain of spiders, each incident to three wires, and then use the results of Section~\ref{sub:qubits_opt_vertex_ordering} to optimize the number of qubits.
However, the number of ways to unfuse a spider in this manner grows exponentially with respect to the number of wires connected to it, which makes this approach impractical.
In this section, we show how the spider fusion rule can be better incorporated into our qubit-count optimization problem and how it relates to the problem of finding the pathwidth of a graph.
We first define the pathwidth as follows.

\begin{definition}[Pathwidth]
    Let $\mathcal{G}=(V,E)$ be a graph with vertex set $V$ and edge set $E$.
    The pathwidth of a vertex $v \in V$ with respect to a vertex ordering $f$, denoted by $pw_f(v)$, is defined as:
    \begin{equation}
        pw_f(v) = \lvert \{u \mid f(u) \leq f(v) < f(w), (u, w) \in E \} \rvert.
    \end{equation}
    The pathwidth of $\mathcal{G}$ with respect to $f$, denoted by $pw_f(\mathcal{G})$, is defined as: 
    \begin{equation}
        pw_f(\mathcal{G}) = \max_{v\in V} pw_f(v).
    \end{equation}
    And the pathwidth of $\mathcal{G}$, denoted by $pw(\mathcal{G})$, is defined as the minimum $pw_f(\mathcal{G})$ value over all possible orderings:
    \begin{equation}
        pw(\mathcal{G}) = \min_{f \in S_V} pw_f(\mathcal{G}) = \min_f \max_{v\in V} pw_f(v)
    \end{equation}
    where $S_V$ is the set of all orderings of the vertices in $V$.
\end{definition}

This definition of pathwidth was initially referred to as the vertex separation number, before its equivalence to the pathwidth was shown~\cite{kinnersley1992vertex}.
The pathwidth problem consists in finding the pathwidth $pw(\mathcal{G})$ of a given graph $\mathcal{G}$.
This problem is known to be NP-hard~\cite{kashiwabara1979np, ohtsuki1979one}.

We now show how rearranging the order of the spiders of a ZX-diagram and using the spider fusion rule to minimize the number of qubits required to implement it through surgery operations relates to the pathwidth problem.
First, we can notice that the maximum number of wires in any vertical cut of a given ZX-diagram can be optimized so that it is always equal to $n+1$, where $n$ is the number of spiders in the initial ZX-diagram.
To do so, we assign an horizontal line to each spider of the ZX-diagram, resulting in $n$ distinct horizontal lines.
Then, for each spider $u$ connected to spiders $v_1, \ldots, v_k$, we can use the fusion rule to decompose $u$ it into a sequence of spiders $u_1, \dots, u_k$ where $u_i$ is connected to $v_i$ and $u_i$ lies on the horizontal line associated with $u$.
This can be done such that the resulting ZX-diagram has a circuit-like form, where the horizontal wires are corresponding to qubits and non-horizontal wires are corresponding to gates acting on these qubits and are performed sequentially.
As a result, each vertical cut of the transformed ZX-diagram intersects $n$ horizontal wires and at most one non-horizontal wire.
An illustrative example is provided in Figure~\ref{fig:trivial_path_decomposition_example}.

\begin{figure}[t]
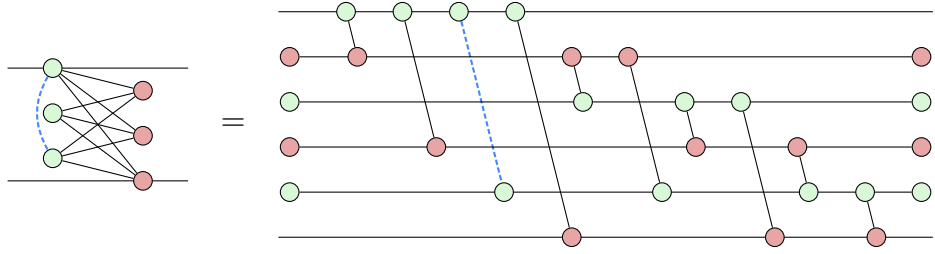

    \centering
    \tikzfig{trivial_path_decomposition_example}
    \caption{Example demonstrating how the maximum number of wires in any vertical cut of a given ZX-diagram can be optimized so that it is equal to $n+1$, where $n$ is the number of spiders in the initial ZX-diagram.
    One horizontal line is associated with each spider of the initial ZX-diagram.
    Then, the fusion rule is applied to ensure that any vertical cut of the ZX-diagram is intersecting at most a single wire connecting two spiders in the initial ZX-diagram.}
    \label{fig:trivial_path_decomposition_example}
\end{figure}

To further optimize the ZX-diagram, instead of associating each horizontal line with a single spider, we can place several spiders on the same horizontal line if certain conditions are met.
We will see that these conditions are closely related to the graph-theoretical notion of path-decomposition, defined as follows~\cite{robertson1983graph}.

\begin{definition}[Path-decomposition]
    Let $\mathcal{G}=(V,E)$ be a graph with vertex set $V$ and edge set $E$.
    A path-decomposition $\mathcal{G}$ is a sequence $X_1, \ldots, X_\ell$ of subsets of $V$ such that:
    \begin{itemize}
        \item for each edge $(u, v) \in E$, there exists $i$ such that $u \in X_i$ and $v \in X_i$;
        \item for all $i,j,k$ such that $1 \leq i \leq j \leq k \leq \ell$, $X_i \cap X_k \subseteq X_j$.
    \end{itemize}
\end{definition}

Let $(\mathcal{G}_\mathcal{D}, \mathcal{I}, \mathcal{O})$ be the signature of a ZX-diagram $\mathcal{D}$, and let $X_1, \ldots, X_\ell$ be a path-decomposition of $\mathcal{G}_\mathcal{D}$ such that $u \in X_1$ for all $u \in \mathcal{I}$ and $u \in X_\ell$ for all $u \in \mathcal{O}$.
Then we can construct a ZX-diagram $\mathcal{D}'$, equivalent to $\mathcal{D}$, from the path-decomposition $X_1, \ldots, X_\ell$ which contains $\ell$ layers of spiders as follows.
First, for every spider $u$ in $\mathcal{D}$ we unfuse it in $\mathcal{D}'$ for every layer $i$ where $u \in X_i$.
Then, for every wire connecting two spiders $u$ and $v$ in $\mathcal{D}$ we connect these two spiders in a layer $i$ of $\mathcal{D}'$ where $u \in X_i$ and $v \in X_i$.
Note that, by definition, the existence of a $X_i$ in the the path-decomposition is guaranteed.
Then, similarly to Figure~\ref{fig:trivial_path_decomposition_example}, we obtain a ZX-diagram which has a circuit-like form, where the wires resulting from the application of the spider fusion rule are horizontal and are corresponding to qubits, and all the other non-horizontal wires are corresponding to gates that can be realized sequentially.
In fact, the example of Figure~\ref{fig:trivial_path_decomposition_example} corresponds to the trivial case where $\ell = 1$ and all spiders are in $X_1$.
Any vertical cut of $\mathcal{D}'$ intersects at most $\max_i{\lvert X_i \rvert}$ horizontal wires and at most $1$ non-horizontal wire.
Therefore, $\mathcal{D}'$ can be implemented trough lattice surgery operations by using at most $\max_i{\lvert X_i \rvert} + 1$ logical qubits.
Thus, we want to find a path-decomposition $X_1, \ldots, X_\ell$ of $\mathcal{G}_\mathcal{D}$ such that $\max_i{\lvert X_i \rvert}$ is minimized, and where $u \in X_1$ for all $u \in \mathcal{I}$ and $u \in X_\ell$ for all $u \in \mathcal{O}$.
The problem of finding a path-decomposition $X_1, \ldots, X_\ell$ of a graph $\mathcal{G}$ that minimizes $\max_i{\lvert X_i \rvert}$ is equivalent to the problem of finding an ordering $f$ of its vertices such that $pw_f(\mathcal{G})$ is minimized~\cite{kinnersley1992vertex}.
Therefore, our problem corresponds to the problem of finding an ordering $f$ of the vertices of $\mathcal{G}_\mathcal{D}$ with minimum pathwidth $pw_f(\mathcal{G}_\mathcal{D})$, but where the first and last set of vertices of the ordering $f$ are fixed such that $f(u) < f(v) < f(w)$ for all $u \in \mathcal{I}$, $v \in V \setminus \{\mathcal{I}, \mathcal{O}\}$, $w \in \mathcal{O}$, where $V$ is the vertex set of $\mathcal{G}_\mathcal{D}$.
We refer to this specific problem as the fixed-endbags pathwidth problem.

\begin{problem}[Fixed-endbags pathwidth]
    Let $\mathcal{G}=(V,E)$ be a graph with vertex set $V$ and edge set $E$ and let $U, W \subseteq V$.
    Find an ordering $f$ of the vertices $V$ satisfying $f(u) < f(v) < f(w)$ for all $u \in U$, $v \in V \setminus \{U, W\}$, $w \in W$ and such that $pw_f(\mathcal{G})$ is minimized.
\end{problem}

Let $(\mathcal{G}_\mathcal{D}, \mathcal{I}, \mathcal{O})$ be the signature of a ZX-diagram $\mathcal{D}$.
We will refer to the pathwidth of $\mathcal{D}$, noted $pw(\mathcal{D})$, as the value $pw_f(\mathcal{G}_\mathcal{D})$ such that $f$ is solving the fixed-endbags pathwidth problem where $\mathcal{I}$ and $\mathcal{O}$ are the fixed first and last bags respectively.
We now prove the following theorem, which states that a ZX-diagram $\mathcal{D}$ cannot be implemented through lattice surgery operations with less than $pw(\mathcal{D})$ logical qubits if $\mathcal{D}$ can only be transformed by using the spider fusion rule, by moving its spiders and by bending its wires.

\begin{theorem}\label{thm:pathwidth_lower_bound}
    Let $\mathcal{D}$ be a ZX-diagram in which no spider of the same color are connected by a wire.
    And let $\mathcal{D}'$ be a ZX-diagram which can be transformed into $\mathcal{D}$ by using the spider fusion rule, by moving its spiders and by bending its wires.
    Then, the maximum number of wire in any vertical cut of $\mathcal{D}'$ is at least $pw(\mathcal{D})$, i.e.\ at least $pw(\mathcal{D})$ logical qubits are required to implement $\mathcal{D}'$ through lattice surgery operations.
\end{theorem}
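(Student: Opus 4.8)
The plan is to prove the lower bound by reading off, from a planar drawing of $\mathcal{D}'$, an ordering $f$ of the vertices of the signature graph $\mathcal{G}_\mathcal{D}$ that is admissible for the fixed-endbags pathwidth problem with endbags $\mathcal{I}$ and $\mathcal{O}$, and that satisfies $pw_f(\mathcal{G}_\mathcal{D}) \le \kappa$, where $\kappa$ denotes the maximum number of wires in any vertical cut of $\mathcal{D}'$. Since $pw(\mathcal{D})$ is by definition the minimum of $pw_f(\mathcal{G}_\mathcal{D})$ over all admissible orderings, this gives $pw(\mathcal{D}) \le \kappa$, which is exactly the claim because $\kappa$ equals the number of logical qubits needed to implement $\mathcal{D}'$.

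First I would set up the correspondence between $\mathcal{D}'$ and $\mathcal{D}$. The transformations relating them (spider fusion, moving spiders, bending wires) all preserve the fully fused form of a ZX-diagram, and $\mathcal{D}$ is already fully fused since no two same-colour spiders of $\mathcal{D}$ are connected; hence each spider of $\mathcal{D}$ arises by fusing a maximal monochromatic connected set of spiders of $\mathcal{D}'$ — call such a set a \emph{blob} — the blobs partition the spiders of $\mathcal{D}'$, and the input/output wires of $\mathcal{D}'$ and $\mathcal{D}$ coincide. Thus $\mathcal{G}_\mathcal{D}$ has one vertex per blob together with $\mathcal{I}$ and $\mathcal{O}$, an edge between two blobs whenever some wire of $\mathcal{D}'$ runs between them, and each vertex of $\mathcal{I}$ (resp.\ $\mathcal{O}$) adjacent to the blob containing the spider incident to its wire. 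Arguing as in the proof of Theorem~\ref{thm:cutwidth_equivalence} — which only moves spiders and so does not disturb the blobs — I may assume no two spiders of $\mathcal{D}'$ are vertically aligned, so they are totally ordered $s_1', \dots, s_N'$ from left to right; I write $\mathrm{cut}_j$ for the vertical cut between $s_j'$ and $s_{j+1}'$ ($0 \le j \le N$), so $\kappa = \max_j |\mathrm{cut}_j|$, and for a blob $v$ I let $\ell(v)$ be the index of its leftmost spider — distinct blobs have distinct values $\ell(v)$.

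Next I would define $f$ to list $\mathcal{I}$ first (in any order), then the blobs in increasing order of $\ell(v)$, then $\mathcal{O}$; this is admissible for the fixed-endbags problem. To bound $pw_f(\mathcal{G}_\mathcal{D})$ it suffices to show $|\partial S| \le \kappa$ for every prefix $S$ of $f$, where $\partial S$ is the set of vertices of $S$ with a neighbour outside $S$ (so $pw_f(\mathcal{G}_\mathcal{D}) = \max_S |\partial S|$). For a prefix strictly inside $\mathcal{I}$ the bound is $|\mathcal{I}| = |\mathrm{cut}_0| \le \kappa$; for a prefix omitting only a subset $R \subseteq \mathcal{O}$, its boundary consists of blobs each carrying an output wire of some vertex of $R$, so has size at most $|R| \le |\mathcal{O}| = |\mathrm{cut}_N| \le \kappa$. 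The remaining prefixes are $S_j := \mathcal{I} \cup \{v : \ell(v) \le j\}$, and for these I would inject $\partial S_j$ into the wires crossing $\mathrm{cut}_j$. An input vertex $u \in \partial S_j$ has its blob $\phi(t_u')$ outside $S_j$, i.e.\ $\ell(\phi(t_u')) > j$; since the spider $t_u' = s_{i_u}'$ lies in that blob we get $i_u \ge \ell(\phi(t_u')) > j$, so the input wire of $u$ crosses $\mathrm{cut}_j$ — send $u$ there. A spider-vertex $v \in \partial S_j$ (so $\ell(v) \le j$) has a neighbour outside $S_j$, namely an output vertex or a blob $v'$ with $\ell(v') > j$; tracing this back to $\mathcal{D}'$ gives a wire of $\mathcal{D}'$ incident to blob $v$ whose far end is an output half-edge, a spider of a blob $\ne v$ not in $S_j$, or a spider of blob $v$ itself — and if that far end lies at index $> j$ then, since blob $v$ also contains the spider $s_{\ell(v)}'$ with $\ell(v) \le j$ and blobs are connected, some wire internal to blob $v$ crosses $\mathrm{cut}_j$; send $v$ to a fixed such wire.

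Finally I would verify injectivity: input wires are claimed only by input vertices and are pairwise distinct; a wire claimed by a spider $v$ of $S_j$ has its "other side" either in blob $v$ or in a blob not in $S_j$ (or it is an output half-edge), so it cannot be the wire assigned to a different spider of $S_j$. This yields $|\partial S_j| \le |\mathrm{cut}_j| \le \kappa$, hence $pw_f(\mathcal{G}_\mathcal{D}) \le \kappa$ and $pw(\mathcal{D}) \le \kappa$. I expect the injective charging of the spider-vertices to be the main technical obstacle — specifically the case distinction on whether the far endpoint of the relevant $\mathcal{D}'$-wire lies to the left or right of $\mathrm{cut}_j$, and the check that two distinct boundary spiders never claim the same wire; by contrast the treatment of the input/output vertices, the fixed-endbags admissibility of $f$, and the reductions to Theorem~\ref{thm:cutwidth_equivalence} are routine.
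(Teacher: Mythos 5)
Your proposal is correct and reaches the bound by a genuinely different, more self-contained route than the paper. The paper's proof also begins with the reordering argument of Theorem~\ref{thm:cutwidth_equivalence} to get a strictly ordered diagram $\tilde{\mathcal{D}}'$ whose maximal vertical cut is still $\kappa$, but it then finishes in two quoted steps: the spider ordering gives $cw_f(\mathcal{G}_{\tilde{\mathcal{D}}'})=\kappa$, pathwidth is bounded above by cutwidth, and $\mathcal{G}_\mathcal{D}$ is obtained from $\mathcal{G}_{\tilde{\mathcal{D}}'}$ by edge contractions (your ``blobs''), under which pathwidth does not increase; hence $pw(\mathcal{D})\le\kappa$. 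You instead construct an admissible ordering of $\mathcal{G}_\mathcal{D}$ directly (inputs, then blobs ordered by leftmost spider, then outputs) and bound every prefix boundary by an explicit injection into the wires crossing the corresponding vertical cut --- in effect reproving ``the pathwidth of a contraction is at most the cutwidth of the original'' from scratch. What your route buys: no appeal to the two graph-theoretic facts, and an ordering that visibly respects the fixed endbags $\mathcal{I}$ and $\mathcal{O}$, a point the paper's chain $pw(\mathcal{D})=pw(\mathcal{G}_\mathcal{D})\le pw(\mathcal{G}_{\tilde{\mathcal{D}}'})$ passes over (its $pw(\mathcal{D})$ is by definition the fixed-endbags quantity). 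What the paper's route buys is brevity.

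One local correction to the charging of a boundary blob $v\in\partial S_j$: the case distinction should be on the position of the traced wire's endpoint \emph{inside} blob $v$, not on its far end. The far end --- an output half-edge or a spider of a blob $v'$ with $\ell(v')>j$ --- always lies to the right of $\mathrm{cut}_j$, so your stated condition is vacuous, and the conclusion you draw from it (``some wire internal to blob $v$ crosses $\mathrm{cut}_j$'') fails when the blob-$v$ endpoint sits at index $\le j$; for instance a singleton blob has no internal wires at all. The correct dichotomy is: if the blob-$v$ endpoint has index $\le j$, charge $v$ to the traced wire itself, which then crosses $\mathrm{cut}_j$; only if it has index $>j$ does connectivity of the blob (together with $\ell(v)\le j$) supply an internal wire of $v$ crossing $\mathrm{cut}_j$. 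With this reading your injectivity verification goes through verbatim, since every wire you charge to a blob $v$ has one endpoint in $v$ and its other side in $v$, outside $S_j$, or open, so distinct boundary vertices claim distinct wires and $|\partial S_j|\le|\mathrm{cut}_j|\le\kappa$ as desired.
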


\begin{proof}
    Let $\mathcal{D}'$ be a ZX-diagram which can be transformed into $\mathcal{D}$ by using the spider fusion rule, by moving its spiders and by bending its wires; and let $\kappa$ be the maximum number of wires in any vertical cut of $\mathcal{D}'$.
    As demonstrated in the proof of Theorem~\ref{thm:cutwidth_equivalence}, $\mathcal{D}'$ can be transformed into an equivalent ZX-diagram $\tilde{\mathcal{D}}'$ such that the spiders of $\tilde{\mathcal{D}}'$ have a strict total order (i.e.\ no two spiders in $\tilde{\mathcal{D}}'$ are vertically aligned) and such that the maximum number of wires in any vertical cut of $\tilde{\mathcal{D}}'$ is also equal to $\kappa$.

    Let $\mathcal{G}_\mathcal{D}$ be the signature graph of $\mathcal{D}$, $\mathcal{G}_{\tilde{\mathcal{D}}'}$ be the signature graph of $\tilde{\mathcal{D}}'$, and let $f$ be the ordering of the vertices in $\mathcal{G}_{\tilde{\mathcal{D}}'}$ corresponding to the ordering of the spiders in $\tilde{\mathcal{D}}'$.
    Then we have $cw_f(\mathcal{G}_{\tilde{\mathcal{D}}'}) = \kappa$.
    The cutwidth is always lower or equal to the pathwidth, therefore we have $pw(\tilde{\mathcal{D}}') \leq pw_f(\mathcal{G}_{\tilde{\mathcal{D}}'}) \leq cw_f(\mathcal{G}_{\tilde{\mathcal{D}}'}) = \kappa$.
    Moreover, the graph $\mathcal{G}_\mathcal{D}$ can be obtain from $\mathcal{G}_{\tilde{\mathcal{D}}'}$ by performing a sequence of edge contraction operations.
    Performing an edge contraction on a graph does not increase its pathwidth.
    Therefore, we have $pw(\mathcal{D}) = pw(\mathcal{G}_\mathcal{D}) \leq pw(\mathcal{G}_{\tilde{\mathcal{D}}'}) = pw(\tilde{\mathcal{D}}')$.
    It follows that $pw(\mathcal{D}) \leq pw(\tilde{\mathcal{D}}') \leq pw_f(\mathcal{G}_{\tilde{\mathcal{D}}'}) \leq cw_f(\mathcal{G}_{\tilde{\mathcal{D}}'}) = \kappa$.
\end{proof}

Consider the algorithm whose pseudo-code is given in Algorithm~\ref{alg:zx_fixed_endbags_pathwidth} and which takes a ZX-diagram $\mathcal{D}$ as input.
The algorithm starts by transforming $\mathcal{D}$ to minimize its number of spiders using the fusion rule.
This transformation ensures that the resulting ZX-diagram doesn't contain any pair of Z spiders or X spiders connected by a wire.
Then, the algorithm computes a solution $f$ to the fixed-endbags pathwidth problem for $\mathcal{G}_\mathcal{D}$ with endbags $\mathcal{I}$ and $\mathcal{O}$, where $(\mathcal{G}_\mathcal{D}, \mathcal{I}, \mathcal{O})$ is the signature of $\mathcal{D}$.
The ordering $f$ is then used to construct an optimized ZX-diagram $\mathcal{D}'$ equivalent to $\mathcal{D}$.
To do so, we can first notice that each vertex $u$ in $\mathcal{G}_\mathcal{D}$ can be associated with an interval that spans from $f(u)$ to the maximum value $f(v)$ where $v$ is a neighbor of $u$ in $\mathcal{G}_\mathcal{D}$ or $v = u$.
By constructing the intersection graph of these intervals, we obtain an interval supergraph of $\mathcal{G}$ in which the clique number minus one is equal to $pw_f(\mathcal{G}_\mathcal{D})$~\cite{bodlaender1998partial}.
Each interval represents a space in which the associated spider in $\mathcal{D}$ can be unfused.
Then, we obtain a ZX-diagram which has a circuit-like form where the wires resulting from the application of the spider fusion rule are horizontal and are corresponding to qubits, and all the other non-horizontal wires are corresponding to gates that can be realized sequentially.
An illustrative example of the complete procedure is provided in Figure~\ref{fig:pathwidth_example}.

\begin{algorithm}[t]
    \caption{Optimize a ZX-diagram using the fixed-endbags pathwidth problem}
    \label{alg:zx_fixed_endbags_pathwidth}
	\SetAlgoLined
	\SetArgSty{textnormal}
	\SetKwFunction{proc}{}
	\SetKwInput{KwInput}{Input}
	\SetKwInput{KwOutput}{Output}
    \KwInput{A ZX-diagram $\mathcal{D}$.}
    \KwOutput{An optimized ZX-diagram $\mathcal{D}'$ equivalent to $\mathcal{D}$.}
	\SetKwProg{Fn}{procedure}{}{}
    $\mathcal{D} \leftarrow$ $\mathcal{D}$ where each pair of connected Z spiders and X spiders have been fused \\
    $(\mathcal{G}_\mathcal{D}, \mathcal{I}, \mathcal{O}) \leftarrow$ signature of $\mathcal{D}$ \\
    $f \leftarrow$ solution to the fixed-endbags pathwidth problem for $\mathcal{G}_\mathcal{D}$ with endbags $\mathcal{I}$ and $\mathcal{O}$ \\
    $\mathcal{D}' \leftarrow$ copy of $\mathcal{D}$ \\
    \ForEach{spider $u$ of $\mathcal{D}$ connected to $n > 1$ spiders $v_1, \ldots, v_n$ such that $f(v_i) < f(v_{i+1})$}{
        $\mathcal{D}' \leftarrow$ copy of $\mathcal{D}'$ where $u$ is unfused into $n$ spiders $u_1, \ldots, u_n$ such that $u_i$ is connected to $u_{i+1}$ and $v_{i}$ \\
        $f \leftarrow$ copy of $f$ where $f(u_i) = f(u)$ \\
    }
    $\mathcal{D}' \leftarrow$ copy of $\mathcal{D}'$ where each spider $u$ appears before another spider $w$ if and only if $f(u) < f(w)$ or $f(u) = f(w)$ and $f(u)$ is connected to a spider $v$ such that $f(u) < f(v)$ \\
    \Return $\mathcal{D}'$ with all its wires straightened
\end{algorithm}

\begin{figure}[tp]
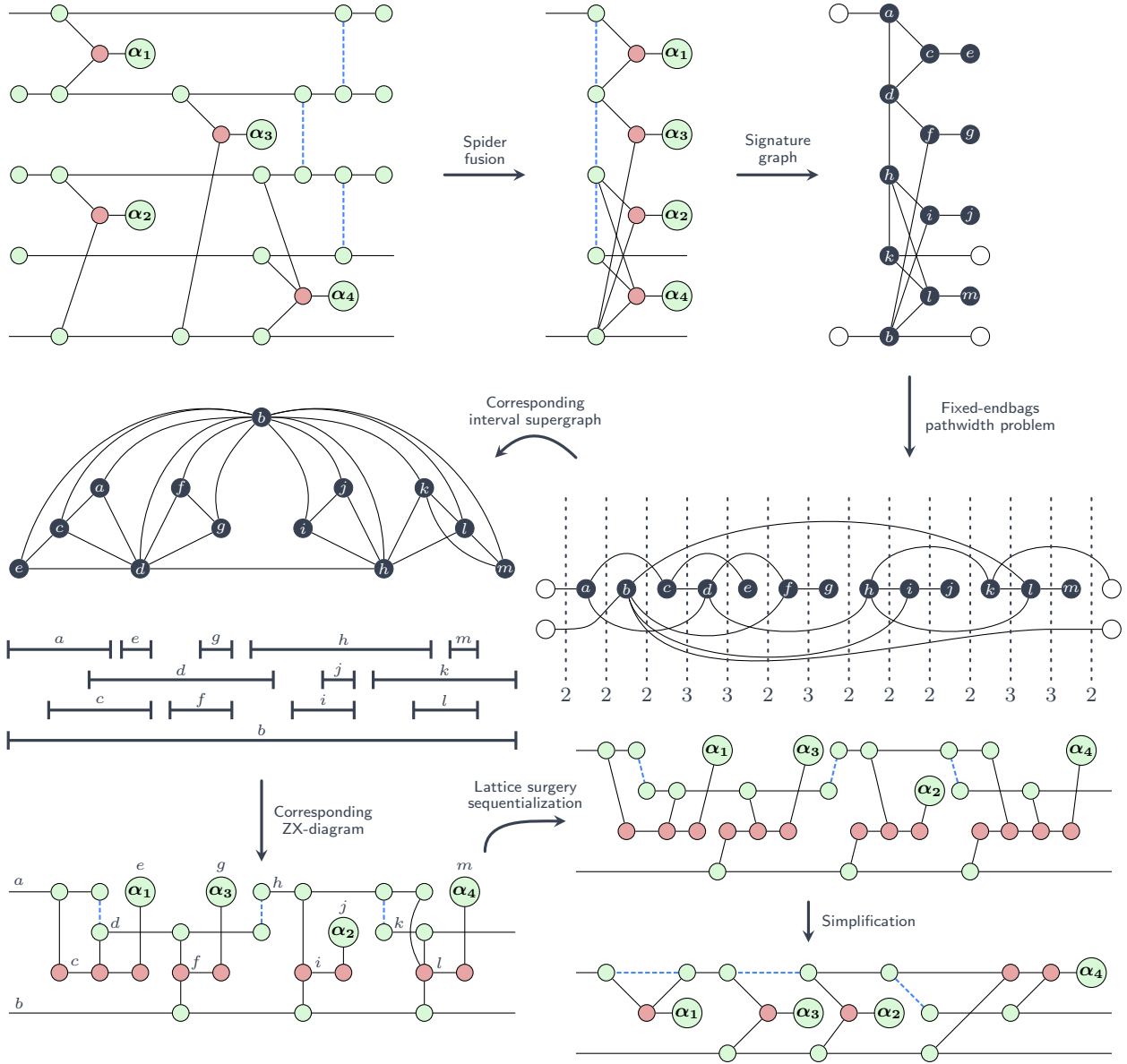

    \centering
    \tikzfig{pathwidth_example_0}
    \caption{Example demonstrating the procedure used for optimizing the number of qubits required to implement a given ZX-diagram using lattice surgery operations by rearranging its spiders and using the spider fusion rule.
    We first fuse the spiders and compute the signature graph $\mathcal{G}_\mathcal{D}$ associated with the resulting ZX-diagram $\mathcal{D}$.
    Then, we solve the fixed-endbags pathwidth problem for $\mathcal{G}_\mathcal{D}$.
    The solution gives us an interval graph where each interval corresponds to a space in which the associated spider of $\mathcal{D}$ must lie and can be unfused.
    We construct the associated ZX-diagram in which the horizontal wires (except the input and output wires) result from the application of the spider fusion rule, and each vertical wire corresponds to a wire in $\mathcal{D}$.
    The lattice surgery operations associated with the vertical wires can then be done sequentially, such that there is no split or merge operation involving more than 2 logical qubits simultaneously and more than one non-horizontal wire.
    Finally, we can simplify the diagram by applying the identity rule and performing parallel operations without increasing the number of wires in any vertical cut of the diagram.
    We obtain an optimized ZX-diagram (equivalent to $\mathcal{D}$) describing a sequence of lattice surgery operations that can be performed using $4$ logical qubits.}
    \label{fig:pathwidth_example}
\end{figure}

The following theorem provides an upper bound on the maximum number of wires in any vertical cut of the ZX-diagram $\mathcal{D}'$ produced by Algorithm~\ref{alg:zx_fixed_endbags_pathwidth}.

\begin{theorem}\label{thm:pathwidth_upper_bound}
    Let $\mathcal{D}$ be a ZX-diagram with signature graph $\mathcal{G}_{D}$, and let $\mathcal{D}'$ be the ZX-diagram produced by Algorithm~\ref{alg:zx_fixed_endbags_pathwidth}.
    Then any vertical cut of $\mathcal{D}'$ intersects at most $pw(\mathcal{D}) + 2$ wires, i.e.\ $\mathcal{D}'$ can be implemented with at most $pw(\mathcal{D}) + 2$ logical qubits through lattice surgery operations.
\end{theorem}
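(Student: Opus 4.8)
The plan is to bound separately the two kinds of wires that a vertical cut of $\mathcal{D}'$ can meet: the \emph{horizontal} wires, which are the ``qubit'' wires produced by the applications of the spider fusion rule in the for-loop of Algorithm~\ref{alg:zx_fixed_endbags_pathwidth} together with the input and output wires of $\mathcal{D}$, and the \emph{non-horizontal} wires, each of which carries exactly one of the original wires of $\mathcal{D}$, i.e.\ one edge of $\mathcal{G}_\mathcal{D}$. I will show that any vertical cut meets at most $pw(\mathcal{D})+1$ of the former and at most $1$ of the latter.

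First I would make precise the interval structure already sketched before the theorem. Let $(\mathcal{G}_\mathcal{D}, \mathcal{I}, \mathcal{O})$ be the signature of the (already fused) diagram, $E$ the edge set of $\mathcal{G}_\mathcal{D}$, and $f$ the ordering returned by the fixed-endbags pathwidth solver, so that $pw_f(\mathcal{G}_\mathcal{D}) = pw(\mathcal{D})$. To each vertex $u$ of $\mathcal{G}_\mathcal{D}$ associate the interval $I_u = [\,f(u),\ \max(\{f(u)\}\cup\{f(v) : (u,v)\in E\})\,]$, and let $H$ be the intersection graph of the $I_u$'s; then $H$ is an interval supergraph of $\mathcal{G}_\mathcal{D}$ with $\omega(H) = pw_f(\mathcal{G}_\mathcal{D}) + 1 = pw(\mathcal{D}) + 1$~\cite{bodlaender1998partial}. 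I would then verify that the ZX-diagram $\mathcal{D}'$ built by the algorithm is, after straightening its wires and inserting extra ``gate columns'' so that non-horizontal wires are realized one at a time, exactly the circuit-like layout in which the horizontal wire carrying spider $u$ (resp.\ the input/output wire $u$) spans precisely the columns of $I_u$. Consequently a vertical cut positioned at coordinate $x$ meets a horizontal wire only for those $u$ with $x\in I_u$, and such $u$ form a clique of $H$; hence at most $\omega(H) = pw(\mathcal{D})+1$ horizontal wires cross any cut.

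Next, for the non-horizontal wires: each original wire of $\mathcal{D}$ joining two spiders, say $u$ with $f(u) < f(v)$, is placed in its own column inside the block of $v$ — this is what the for-loop, which unfuses $v$ once per incident edge, together with the final reordering line of Algorithm~\ref{alg:zx_fixed_endbags_pathwidth}, achieves (edges incident to an input or output vertex are attached at the first or last column respectively). Since these gate columns are pairwise disjoint, every vertical cut meets at most one non-horizontal wire. Adding the two bounds shows that every vertical cut of $\mathcal{D}'$ meets at most $pw(\mathcal{D}) + 2$ wires, which by the correspondence between maximal vertical cuts and logical-qubit count is the claimed bound.

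I expect the main obstacle to be carrying out the first step carefully: checking that the layout produced by the algorithm really does run each unfused spider $u$ along its interval $I_u$, and that the momentary presence of a spider's own fusion chain inside its block together with an incident gate wire never pushes the cut size above $\omega(H)+1$. A clean way to close this is to observe that at the worst cut inside the block of $v$ the set of alive horizontal wires, including $v$'s own chain, is contained in the clique of $H$ at coordinate $f(v)$ (so it has size at most $\omega(H) = pw(\mathcal{D})+1$), while the single gate wire being realized there contributes the remaining $+1$. Routine bookkeeping about closed versus half-open interval endpoints and about the precise position of each gate column can be suppressed.
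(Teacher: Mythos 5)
Your proposal is correct and follows essentially the same route as the paper's proof: both use the intervals $[f(u), \max_{w \in N[u]} f(w)]$ and the fact (from the cited Bodlaender reference) that their intersection graph is an interval supergraph of $\mathcal{G}_\mathcal{D}$ with clique number $pw(\mathcal{D})+1$, bounding the horizontal wires crossing any cut by $pw(\mathcal{D})+1$ and the non-horizontal wires by one. The only cosmetic difference is that the paper phrases the interval bound as an arrangement on $pw(\mathcal{D})+1$ non-overlapping horizontal lines rather than via the clique at the cut coordinate, which is the same fact for interval graphs.
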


\begin{proof}
    Let $f$ be a solution to the fixed-endbags pathwidth problem for $\mathcal{G}_\mathcal{D}$ with endbags $\mathcal{I}$ and $\mathcal{O}$, where $(\mathcal{G}_\mathcal{D}, \mathcal{I}, \mathcal{O})$ is the signature of $\mathcal{D}$.
    And let $S$ be the set of intervals defined as follows:
    \begin{equation}
        S = \{[f(u), f(v)] \mid u \in V, f(v) = \max_{w \in N[u]} f(w) \}
    \end{equation}
    where $V$ is the vertex set of $\mathcal{G}_\mathcal{D}$ and $N[u]$ denotes the closed neighborhood of the vertex $u$ in $\mathcal{G}_\mathcal{D}$.
    The intervals in $S$ can be arranged on $pw_f(\mathcal{G}_\mathcal{D}) + 1 = pw(\mathcal{D}) + 1$ distinct horizontal lines such that there is no overlapping point between two intervals on any line~\cite{bodlaender1998partial}.
    By construction, all the spiders of $\mathcal{D}'$ are positioned on these intervals.
    Therefore, any vertical cut of $\mathcal{D}'$ intersects at most $pw(\mathcal{D}) + 1$ horizontal wires.
    Moreover, any vertical cut of $\mathcal{D}'$ intersects at most one non-horizontal wire.
    Thus, any vertical cut of $\mathcal{D}'$ intersects at most $pw(\mathcal{D}) + 2$ wires.
\end{proof}

Let $\mathcal{D}$ be a ZX-diagram and let $\mathcal{D}'$ be another ZX-diagram which can be transformed into $\mathcal{D}$ by using the spider fusion rule, by moving its spiders and by bending its wires.
Theorem~\ref{thm:pathwidth_lower_bound} gives us a lower bound of $pw(\mathcal{D})$ and Theorem~\ref{thm:pathwidth_upper_bound} gives us an upper bound of $pw(\mathcal{D}) + 2$ for the number of logical qubits required to implement $\mathcal{D}'$ through lattice surgery operations.
These upper and lower bounds are tight.
In Figure~\ref{fig:pathwidth_opt_examples}, we provide three ZX-diagrams for which the associated optimized ZX-diagram $\mathcal{D}'$ requires at least $pw(\mathcal{D})$, $pw(\mathcal{D}) + 1$, and $pw(\mathcal{D}) + 2$ logical qubits respectively to be implemented through lattice surgery operations.

\begin{figure}[t]
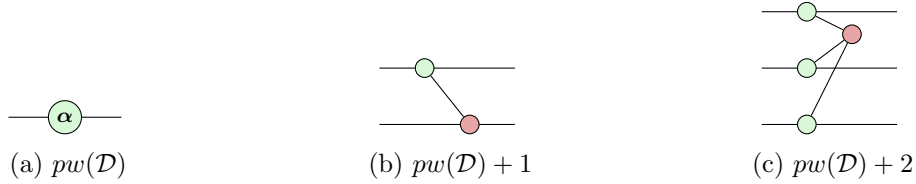

    \centering
    \begin{subfigure}{0.3\textwidth} \centering
        \tikzfig{pathwidth_opt_example_0}
        \caption{$pw(\mathcal{D})$}
    \end{subfigure}
    \begin{subfigure}{0.3\textwidth} \centering
        \tikzfig{pathwidth_opt_example_1}
        \caption{$pw(\mathcal{D})+1$}
    \end{subfigure}
    \begin{subfigure}{0.3\textwidth} \centering
        \tikzfig{pathwidth_opt_example_2}
        \caption{$pw(\mathcal{D})+2$}
    \end{subfigure}
    \caption{Example of ZX-diagrams that are requiring $pw(\mathcal{D})$, $pw(\mathcal{D})+1$ and $pw(\mathcal{D})+2$ logical qubits respectively to be implemented through lattice surgery operations.}
    \label{fig:pathwidth_opt_examples}
\end{figure}

We now prove the NP-hardness of the fixed-endbags pathwidth problem by relying on its relation with the pathwidth problem.

\begin{theorem}
    The fixed-endbags pathwidth problem is NP-hard.
\end{theorem}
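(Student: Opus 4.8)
The plan is to reduce the pathwidth problem, which is NP-hard~\cite{kashiwabara1979np, ohtsuki1979one}, to the fixed-endbags pathwidth problem, in the same spirit as the proof of Theorem~\ref{thm:cutwidth_np_hard}. The key observation is that the fixed-endbags pathwidth problem is a generalization of the pathwidth problem: when $U = W = \emptyset$, the constraint ``$f(u) < f(v) < f(w)$ for all $u \in U$, $v \in V \setminus \{U, W\}$, $w \in W$'' is vacuous, so a solution to the fixed-endbags pathwidth problem for $(\mathcal{G}, \emptyset, \emptyset)$ is exactly an ordering $f$ of the vertices of $\mathcal{G}$ achieving $pw_f(\mathcal{G}) = pw(\mathcal{G})$. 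Reading off $pw_f(\mathcal{G})$ from $f$ is a polynomial-time post-processing step, so the pathwidth of an arbitrary graph can be computed with a single call to the fixed-endbags pathwidth problem; hence the latter is at least as hard as the pathwidth problem, and therefore NP-hard.

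If one prefers a reduction that uses nonempty bags (matching the setting where the bags $\mathcal{I}$ and $\mathcal{O}$ originate from the input and output wires of a ZX-diagram), the plan is instead the following. Given $\mathcal{G} = (V, E)$, construct $\mathcal{G}'$ by adding two new isolated vertices $u^\star$ and $w^\star$, and take $U = \{u^\star\}$ and $W = \{w^\star\}$; since $U$ and $W$ are singletons, any admissible ordering $f'$ places $u^\star$ first and $w^\star$ last. First I would check that appending an isolated vertex to either end of an ordering contributes $pw_{f'} = 0$ at that vertex and leaves $pw_{f'}(v)$ equal to $pw_f(v)$ for every $v \in V$, where $f$ is the induced ordering of $V$ (this holds because $E' = E$, so every crossing edge lies entirely in $V$ and the relative order of $V$ is preserved). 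It follows that $pw_{f'}(\mathcal{G}') = pw_f(\mathcal{G})$, and conversely every ordering of $V$ extends to an admissible $f'$; hence solving the fixed-endbags pathwidth problem for $(\mathcal{G}', \{u^\star\}, \{w^\star\})$ again yields $pw(\mathcal{G})$.

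In either version the reduction is essentially immediate, so I do not expect a genuine obstacle; the only point requiring a little care is the verification, in the nonempty-bag variant, that the two appended isolated vertices do not change the value $pw_f$ at any vertex, which is a routine inspection of the cut counts in the definition of $pw_f$.
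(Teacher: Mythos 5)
Your proposal is correct and follows the same overall strategy as the paper, namely reducing the NP-hard pathwidth problem to the fixed-endbags pathwidth problem, but the mechanics of your reduction differ. The paper uses a Turing-style reduction: it observes that $pw(\mathcal{G}) = \min_{u \neq w} pw_{f_{\{u\},\{w\}}}(\mathcal{G})$, i.e.\ it solves the fixed-endbags problem once for every pair of singleton end-bags (polynomially many calls) and takes the minimum. Your first variant is even more direct: since the problem statement allows $U, W \subseteq V$ to be empty, the constraint is vacuous and a single call with $U = W = \emptyset$ already computes $pw(\mathcal{G})$; this is the cleanest argument, though it trades on the problem statement permitting empty bags, which is arguably not the regime motivated by ZX-diagrams with input and output wires. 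Your second variant closes exactly that loophole: appending two isolated vertices $u^\star, w^\star$ and fixing $U = \{u^\star\}$, $W = \{w^\star\}$ gives a single-call Karp-style reduction with nonempty bags, and your verification that isolated end vertices contribute $0$ to $pw_{f'}$ and leave the cut counts at every $v \in V$ unchanged is routine and correct. What your route buys is a one-call many-one reduction (and no implicit assumption that $|V| \geq 2$), whereas the paper's route buys a statement (its Equation for $pw(\mathcal{G})$ as a minimum over singleton end-bag choices) that stays entirely within the bag convention it actually uses later; both establish NP-hardness equally well.
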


\begin{proof}
    Let $\mathcal{G}=(V, E)$ be a graph with vertex set $V$ and edge set $E$.
    And let $f_{U, W}$ be an optimal solution to the fixed-endbags for $\mathcal{G}$ where $U$ and $W$ are the fixed first and last bags of the ordering $f_{U, W}$.
    Then, by definition, the pathwidth of $\mathcal{G}$ satisfies
    \begin{equation}\label{eq:pathwidth_red}
        pw(\mathcal{G}) = \min_{\substack{u, w \in V\\u\neq w}} pw_{f_{\{u\}, \{w\}}}(\mathcal{G}).
    \end{equation}
    Therefore, we can solve the pathwidth problem by solving the fixed-endbags pathwidth problem a polynomial number of times, and then use these solutions to compute the pathwidth of $\mathcal{G}$ via Equation~\ref{eq:pathwidth_red} in polynomial time.
    Thus, the fixed-endbags pathwidth problem is at least as hard as the pathwidth problem, which is an NP-hard problem~\cite{kashiwabara1979np, ohtsuki1979one}.
\end{proof}

Some modifications to the initial ZX-diagram and its associated signature graph can be realized in order to solve the fixed-endbags pathwidth problem more efficiently.
Similarly to the fixed-endvertices cutwidth problem, spiders incident to two wires can be replaced by a single wire and inserted back on their associated wire in the optimized ZX-diagram without altering the obtained solution.
Also, if a vertex $u$ in a signature graph $\mathcal{G}_\mathcal{D}$ is connected to exactly one vertex $v$ associated with an input or output wire in the ZX-diagram $\mathcal{D}$, then we can remove $v$ from $\mathcal{G}_\mathcal{D}$ and replace it with $u$ in the input or output vertex set.

\section{Qubit-count optimization in quantum circuits using ZX-calculus}\label{sec:qubits_opt_quantum_circuits}

In this section, we rely on the ZX-calculus as an intermediate language to exploit the results of Section~\ref{sec:qubits_opt_lattice_surgery} and optimize the number of qubits in a quantum circuit.
The procedure for optimizing the number of qubits in a given quantum circuit $C$ using Algorithm~\ref{alg:zx_fixed_endbags_pathwidth} is as follows:
\begin{enumerate}
    \item Translate $C$ into a ZX-diagram $\mathcal{D}$ by using the translation table of Figure~\ref{fig:quantum_circuit_zx_translation}, and use the spider fusion rule to minimize the number of spiders within $\mathcal{D}$.
    \item Execute Algorithm~\ref{alg:zx_fixed_endbags_pathwidth} with $\mathcal{D}$ given as input to produce an optimized and equivalent ZX-diagram $\mathcal{D}'$.
    \item Translate $\mathcal{D}'$ into a quantum circuit $C'$ by using the translation table of Figure~\ref{fig:quantum_circuit_zx_translation}.
\end{enumerate}

\begin{figure}[t]
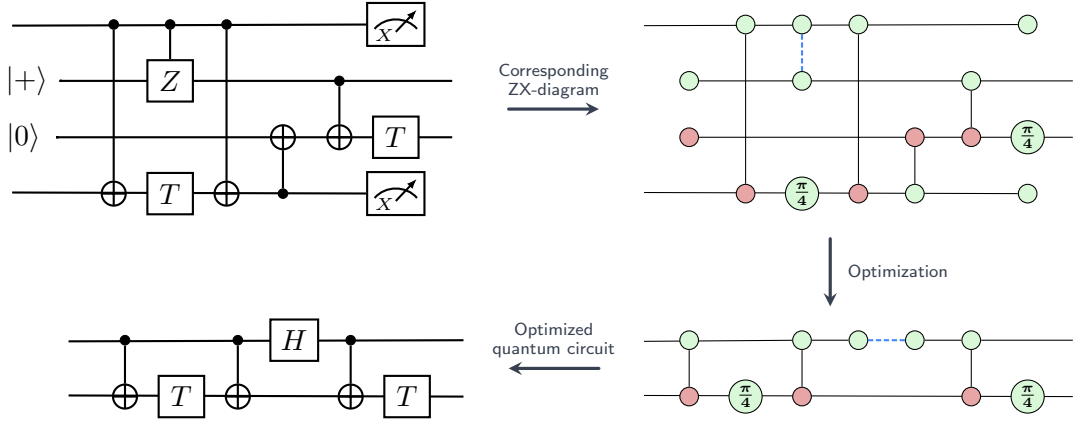

    \centering
    \tikzfig{qubits_opt_circuit_example}
    \caption{Overview of the process described for optimizing the number of qubits in a quantum circuit.
    The initial quantum circuit is translated into a ZX-diagram.
    Then, the ZX-diagram is optimized using the method explained in Section~\ref{sec:qubits_opt_lattice_surgery}.
    Finally, the optimized ZX-diagram is translated back into a quantum circuit with an optimized number of qubits.}
    \label{fig:circuit_qubits_opt_example}
\end{figure}

An example is provided in Figure~\ref{fig:circuit_qubits_opt_example}.
Translating a quantum circuit into a ZX-diagram can always easily be done, for example by using the procedure described in Section~\ref{sub:zx_calculus}.
The spider fusion rule is applied to minimize the number of spiders within $\mathcal{D}$ in order to improve the effectiveness of Algorithm~\ref{alg:zx_fixed_endbags_pathwidth}.
Note that a more sophisticated procedure, using other rules than the spider fusion rule, could be used here to transform $\mathcal{D}$ before executing Algorithm~\ref{alg:zx_fixed_endbags_pathwidth}.
We leave it as a question for future research to determine which transformation procedure for $\mathcal{D}$ could be use to maximize the effectiveness of Algorithm~\ref{alg:zx_fixed_endbags_pathwidth}.
The last step of the procedure consists in translating the optimized ZX-diagram $\mathcal{D}'$ into a quantum circuit $C'$.
We can notice that a non-horizontal wire in $\mathcal{D}'$ is either connecting a Z spider to an X spider, or is a dashed blue edge connected two Z spiders.
In the first case, it can be translated to a CNOT gate; and in the second case, it can be translated to a $CZ$ gate.
And all the horizontal wires in $\mathcal{D}$ can be mapped to horizontal wires in $C'$, where they represent qubits.
Thus, we obtain a quantum circuit $C'$ which has a number of qubits characterized by the following theorem.

\begin{theorem}
    Let $\mathcal{D}$ be a ZX-diagram in which no spider of the same color are connected by a wire.
    Let $\mathcal{D}'$ be a ZX-diagram produced by Algorithm~\ref{alg:zx_fixed_endbags_pathwidth} when $\mathcal{D}$ is given as input and let $C'$ be the quantum circuit translated from $\mathcal{D}'$ using the translation table of Figure~\ref{fig:quantum_circuit_zx_translation}.
    Then the number of qubits in $C'$ is at least equal to $pw(\mathcal{D})$ and at most equal to $pw(\mathcal{D}) + 1$.
\end{theorem}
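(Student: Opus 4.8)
This theorem is the quantum-circuit analogue of Theorems~\ref{thm:pathwidth_lower_bound} and~\ref{thm:pathwidth_upper_bound}; the only change is that the single non-horizontal wire permitted at each vertical cut of $\mathcal{D}'$ no longer costs a qubit, since a two-qubit gate in a circuit needs no ancilla whereas a split or merge in lattice surgery does, so the $+2$ of Theorem~\ref{thm:pathwidth_upper_bound} drops to $+1$. The plan is to prove the two inequalities separately, using those theorems together with the circuit-like shape of the diagram $\mathcal{D}'$ returned by Algorithm~\ref{alg:zx_fixed_endbags_pathwidth}. The key observation is that, under the translation of Figure~\ref{fig:quantum_circuit_zx_translation}, every horizontal wire of $\mathcal{D}'$ becomes a qubit of $C'$ and every non-horizontal wire becomes a two-qubit gate — a $CNOT$ when it joins a Z spider to an X spider, a $CZ$ when it is a Hadamard edge joining two Z spiders — acting on two already-present horizontal lines. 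Hence the number of qubits of $C'$ equals the number of horizontal wires of $\mathcal{D}'$, and two-qubit gates create no new qubits.

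For the upper bound I would quote the construction in the proof of Theorem~\ref{thm:pathwidth_upper_bound}: there the spiders of $\mathcal{D}'$ sit on the intervals of the set $S$, and these intervals are distributed over $pw_f(\mathcal{G}_\mathcal{D}) + 1 = pw(\mathcal{D}) + 1$ horizontal lines with no two of them overlapping on a line. So $\mathcal{D}'$ has at most $pw(\mathcal{D}) + 1$ horizontal wires, and by the observation above $C'$ has at most $pw(\mathcal{D}) + 1$ qubits.

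For the lower bound, write $q$ for the number of qubits of $C'$ and draw the ZX-diagram of $C'$ with the two spiders of every two-qubit gate placed at the same horizontal position, so that each connecting wire (ordinary wire or Hadamard edge) is vertical; call this diagram $\tilde{\mathcal{D}}$. Since $\mathcal{D}'$ is obtained from $\mathcal{D}$ by the spider fusion rule (used here to unfuse), by moving spiders and by bending wires, and $\tilde{\mathcal{D}}$ differs from $\mathcal{D}'$ only by further moving spiders, $\tilde{\mathcal{D}}$ too can be transformed into $\mathcal{D}$ using only those operations; Theorem~\ref{thm:pathwidth_lower_bound} then gives that some vertical cut of $\tilde{\mathcal{D}}$ crosses at least $pw(\mathcal{D})$ wires. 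On the other hand, any vertical cut of $\tilde{\mathcal{D}}$ taken at a position meeting no spider crosses only horizontal wires — the qubit lines running through that position — of which there are at most $q$, because the only non-horizontal wires of $\tilde{\mathcal{D}}$ are the vertical connecting wires, which sit exactly at the gate positions. Combining the two bounds gives $pw(\mathcal{D}) \le q$.

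The step I expect to be the main obstacle is this last conversion from a vertical-cut bound to a qubit-count bound. In the diagram $\mathcal{D}'$ actually produced the two spiders of a two-qubit gate have distinct positions, so a cut running between them crosses $q$ horizontal wires \emph{and} one non-horizontal wire; feeding that straight into Theorem~\ref{thm:pathwidth_lower_bound} would yield only $q \ge pw(\mathcal{D}) - 1$. The crux is to observe that the circuit can always be re-drawn with its two-qubit gates vertically narrow, so that the maximum width of a vertical cut equals the qubit count, which is exactly what lets Theorem~\ref{thm:pathwidth_lower_bound} deliver the sharp bound. The remaining points — that the round trip through the translation table identifies qubits with horizontal wires and two-qubit gates with non-horizontal wires, and that the re-drawn diagram is still reachable from $\mathcal{D}$ by the permitted rules — are routine verifications.
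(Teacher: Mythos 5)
Your proposal follows essentially the same route as the paper: identify the horizontal wires of $\mathcal{D}'$ with the qubits of $C'$ and the non-horizontal wires with CNOT/$CZ$ gates, obtain the upper bound from the interval arrangement in the proof of Theorem~\ref{thm:pathwidth_upper_bound}, and the lower bound from Theorem~\ref{thm:pathwidth_lower_bound}. Your additional step for the lower bound (redrawing $C'$ with two-qubit gates vertically narrow so that the maximum vertical-cut width equals the qubit count before invoking Theorem~\ref{thm:pathwidth_lower_bound}) is legitimate and in fact more careful than the paper's own proof, which cites Theorem~\ref{thm:pathwidth_lower_bound} as if it directly bounded the number of \emph{horizontal} wires in a cut.
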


\begin{proof}
    Each non-horizontal wire in $\mathcal{D}'$ is either connected a Z spider to a X spider or is a dashed blue edge connected two Z spiders.
    Therefore, each non-horizontal wire is translated as either a CNOT gate or a $CZ$ gate in $C'$.
    As stated by Theorem~\ref{thm:pathwidth_lower_bound}, the maximum number of horizontal wires in any vertical cut of $\mathcal{D}'$ is at least equal to $pw(\mathcal{D})$.
    And as demonstrated in the proof of Theorem~\ref{thm:pathwidth_upper_bound}, the maximum number of horizontal wires in any vertical cut of $\mathcal{D}'$ is at most equal to $pw(\mathcal{D}) + 1$.
    Each horizontal wire in $\mathcal{D}'$ is translated as a qubit in $C'$.
    Therefore, the number of qubits in $C'$ is at least equal to $pw(\mathcal{D})$ and at most equal to $pw(\mathcal{D}) + 1$.
\end{proof}

\section{Benchmarks}\label{sec:bench}

In this section, we evaluate the qubit-count reductions achieved by our approaches on a set of quantum circuits in which the number of $T$ gates has been optimized.
To solve the Hadamard gates degadgetization problem, we used the optimal solver of the directed feedback vertex set problem described in Reference~\cite{bathie2022pace}.
For the fixed-endbags pathwidth problem, we applied a fast but non-optimal greedy algorithm.
Our implementations of the algorithms used for the benchmarks are open source~\cite{github}.

The quantum circuits were obtained from References~\cite{amyGithub} and~\cite{reversibleBenchmarks}.
These circuits were pre-optimized as follows.
First, the \texttt{FastTMerge} algorithm of Reference~\cite{vandaele2024optimalnumberparametrizedrotations} was applied to rapidly and efficiently reduce the number of $T$ gates in the circuits.
This method optimally minimizes the number of non-Clifford gates in Clifford+$R_Z$ circuits when no information about the angles of the $R_Z$ gates is known~\cite{vandaele2024optimalnumberparametrizedrotations}.
Then, the \texttt{InternalHOpt} algorithm of Reference~\cite{vandaele2024optimal} was applied to optimize the number of internal Hadamard gates in the circuits.
This method is also optimal for Clifford+$R_Z$ circuits when no information about the angles of the $R_Z$ gates is known~\cite{vandaele2024optimalnumberparametrizedrotations}.
Finally, the Hadamard gates of the circuits were gadgetized and the \texttt{FastTODD} algorithm of Reference~\cite{vandaele2024lower} was applied to optimize the number of $T$ gates in the circuits.
We chose the \texttt{FastTODD} $T$-count optimizer of Reference~\cite{vandaele2024lower} as it currently yields the best results on most of the quantum circuits, while being faster than other $T$-count optimizers.
Note that the benchmark results may vary depending on the pre-optimization procedure used, such as when a different $T$-count optimizer is applied.

The benchmark results are presented in Table~\ref{tab:bench_qubits_opt}.
We can notice that the approach based on the fixed-endbags pathwidth problem performs as well as or surpasses the Hadamard gates degadgetization approach on most quantum circuits evaluated.
Nevertheless, the Hadamard gates degadgetization approach is performing better than the approach based on the fixed-endbags pathwidth problem on a few quantum circuits.
This discrepancy is probably due to the non-optimality of the greedy algorithm used in the benchmarks for the fixed-endbags pathwidth problem.
As future work, it would be valuable to develop more efficient algorithms for tackling the fixed-endbags pathwidth problem.
An optimal algorithm for the fixed-endbags pathwidth problem would certainly have a large complexity due to the NP-hardness of the problem but could potentially be executed on small circuits.

\begin{table}
    \centering
\begin{tabular}{lccccccc}
        \toprule
         & \multicolumn{3}{c}{Pre-optimization} && \multicolumn{3}{c}{Qubit-count} \\
        \cmidrule(lr){2-4} \cmidrule(lr){6-8}
        Circuit & $n$ & $h$ & Qubit-count && Degadgetization && Pathwidth \\
        \midrule
        Adder$_8$ & 24 & 37 & 61 && 58 && 55 \\
        Barenco Tof$_3$ & 5 & 3 & 8 && 7 && 7 \\
        Barenco Tof$_4$ & 7 & 7 & 14 && 11 && 10 \\
        Barenco Tof$_5$ & 9 & 11 & 20 && 13 && 12 \\
        Barenco Tof$_{10}$ & 19 & 31 & 50 && 36 && 31 \\
        CSLA MUX$_3$ & 15 & 6 & 21 && 20 && 20 \\
        CSUM MUX$_9$ & 30 & 12 & 42 && 40 && 38 \\
        Grover$_5$ & 9 & 68 & 77 && 57 && 33 \\
        Ham$_{15}$ (high) & 20 & 331 & 351 && 175 && 123 \\
        Ham$_{15}$ (low) & 17 & 29 & 46 && 35 && 31 \\
        Ham$_{15}$ (med) & 17 & 54 & 71 && 46 && 37 \\
        Mod Adder$_{1024}$ & 28 & 304 & 332 && 246 && 198 \\
        Mod Mult$_{55}$ & 9 & 3 & 12 && 10 && 12 \\
        Mod Red$_{21}$ & 11 & 17 & 28 && 24 && 19 \\
        QCLA Adder$_{10}$ & 36 & 25 & 61 && 55 && 49 \\
        QCLA Com$_7$ & 24 & 18 & 42 && 36 && 32 \\
        QCLA Mod$_7$ & 26 & 58 & 84 && 71 && 41 \\
        QFT$_{4}$ & 5 & 38 & 43 && 8 && 8 \\
        RC Adder$_6$ & 14 & 10 & 24 && 22 && 23 \\
        Tof$_3$ & 5 & 2 & 7 && 6 && 6 \\
        Tof$_4$ & 7 & 4 & 11 && 9 && 10 \\
        Tof$_5$ & 9 & 6 & 15 && 13 && 12 \\
        Tof$_{10}$ & 19 & 16 & 35 && 27 && 25 \\
        VBE Adder$_3$ & 10 & 4 & 14 && 13 && 13 \\
        \bottomrule
\end{tabular}
\caption{Qubit-count achieved by different methods on a set of quantum circuits with an optimized number of $T$ gates.
    The values $n$ and $h$ are respectively corresponding to the number of qubits and the number of internal Hadamard gates gadgetized in the initial circuit.
    The qubit-count in the initial circuit is equal to $n+h$.
    The degadgetization column refers to the number of qubits achieved by the Hadamard gates degadgetization method.
    The pathwidth column refers to the number of qubits achieved by the approach based on the fixed-endbags pathwidth problem.
    All circuits were processed by the algorithms in no more than a few seconds.}
    \label{tab:bench_qubits_opt}
\end{table}

\section{Perspectives}\label{sec:perspectives}

This section outlines several potential avenues for future research work based on the results presented herein.

\subsection{Correction strategy and depth optimization}

To deterministically realize an operator in lattice surgery, it is required to have a correction strategy in order to counteract the non-deterministic nature of some lattice surgery operations.
This correction strategy imposes a specific order in which the lattice surgery operations must be realized.
Therefore, while necessary for determinism, the correction strategy is limiting the actions that can be performed to optimize the number of qubits.
A method for finding a correction strategy, called PF-flow, for an arbitrary ZX-diagram has been proposed in Reference~\cite{de2019pauli}.
However, the completeness of this approach has not been proved.
This means that there may be some ZX-diagrams which don't have a PF-flow but which can still be implemented deterministically trough lattice surgery operations by using another correction strategy.
Therefore, an open problem is to find a complete correction strategy, and to better characterize the set of ZX-diagrams which are deterministically implementable using lattice surgery operations.
Addressing these challenges could potentially lead to better optimization in the qubit-count.

Furthermore, the PF-flow has been designed to optimize the computational depth, as it aims at finding a correction strategy that divides the ZX-diagram in a minimized number of layers.
A simple method for optimizing both the depth and the qubit-count could consists in finding a PF-flow and then applying our methods for optimizing the number of qubits in each layer of the ZX-diagram.
However, this approach does not minimize the number of qubits in between two layers of the ZX-diagram.
An open problem is to develop a more comprehensive strategy to simultaneously optimize both the computational depth and the number of qubits.

\subsection{Qubit-count optimization using a complete ZX-calculus equational theory}

We proposed an approach to optimize the number of logical qubits required to implement a given ZX-diagram through lattice surgery operations.
Our optimization procedure rearranges the order of the spiders and makes use of the identity and fusion rules of the ZX-calculus.
A natural extension would be to consider other ZX-calculus rules that could lead to even better optimization.
For instance, applying our algorithm to the following ZX-diagram, which represents a phase gadget, yields a ZX-diagram for which $5$ logical qubits are required to realize the lattice surgery operations it represents:

\begin{equation}
    \tikzfig{phase_gadget_qubits_opt_0}
\end{equation}
However, as shown below, there exists an equivalent ZX-diagram which can be implemented in lattice surgery with only $4$ logical qubits:

\begin{equation}
    \tikzfig{phase_gadget_qubits_opt_1}
\end{equation}

Ultimately, our goal is to rely on a complete ZX-calculus equational theory to find an equivalent ZX-diagram which can be implemented in lattice surgery by using a minimal number of qubits.
Our results indicates that this consists in finding an equivalent ZX-diagram $\mathcal{D}$ such that $pw(\mathcal{D})$ is minimized.
Additionally, we would like to maintain the ability to translate the optimized ZX-diagram directly into a quantum circuit, akin to the straightforward translation enabled by the proposed approach based on the fixed-endbags pathwidth problem.
In our approach, this is guaranteed by the fact that each non-horizontal wire corresponds either to a dashed blue edge connecting two Z spiders, representing a $CZ$ gate, or a wire connected a Z spider and a X spider, representing a CNOT gate.
For other cases, an intermediate qubit may be required to translate the ZX-diagram into a quantum circuit, for example:

\begin{equation}
    \tikzfig{zx_circuit_intermediate}
\end{equation}

\section{Conclusion}
We proposed novel methods for optimizing the number of qubits in quantum circuits and the number of logical qubits required to perform lattice surgery operations.
When evaluated on a set of quantum circuits in which the number of $T$ gates has been optimized, our methods lead to a reduction of up to 65\% in the number of qubits.
On the basis of our results, we suggested several promising avenues for future research work.
These include extending the proposed approach to potentially achieve even better reduction in the number of qubits, and investigating how our approach could be adapted to take into account other important metrics such as the computational depth.

Our work make use of the ZX-calculus and its fundamental rules to seamlessly formulate the qubit-count optimization problem and establish a connection with well-studied problems in graph theory.
As such, our findings provide a compelling case for the use of the ZX-calculus in quantum compilation.
This further motivates the adoption of the ZX-calculus as a formal, powerful, and versatile tool for representing and optimizing quantum computation.

\bibliographystyle{unsrt}
\bibliography{ref.bib}

\begin{thebibliography}{10}

\bibitem{jones2013low}
Cody Jones.
\newblock {Low-overhead constructions for the fault-tolerant Toffoli gate}.
\newblock {\em Physical Review A}, 87(2):022328, 2013.

\bibitem{heyfron2018efficient}
Luke~E Heyfron and Earl~T Campbell.
\newblock {An efficient quantum compiler that reduces T count}.
\newblock {\em Quantum Science and Technology}, 4(1):015004, 2018.

\bibitem{de2020fast}
Niel de~Beaudrap, Xiaoning Bian, and Quanlong Wang.
\newblock {Fast and Effective Techniques for T-Count Reduction via Spider Nest Identities}.
\newblock In {\em 15th Conference on the Theory of Quantum Computation, Communication and Cryptography}, 2020.

\bibitem{ruiz2024quantum}
Francisco J.~R. Ruiz, Tuomas Laakkonen, Johannes Bausch, Matej Balog, Mohammadamin Barekatain, Francisco J.~H. Heras, Alexander Novikov, Nathan Fitzpatrick, Bernardino Romera-Paredes, John van~de Wetering, Alhussein Fawzi, Konstantinos Meichanetzidis, and Pushmeet Kohli.
\newblock {Quantum Circuit Optimization with AlphaTensor}.
\newblock {\em arXiv preprint arXiv:2402.14396}, 2024.

\bibitem{vandaele2024lower}
Vivien Vandaele.
\newblock {Lower $T$-count with faster algorithms}.
\newblock {\em arXiv preprint arXiv:2407.08695}, 2024.

\bibitem{bravyi2005universal}
Sergey Bravyi and Alexei Kitaev.
\newblock {Universal quantum computation with ideal Clifford gates and noisy ancillas}.
\newblock {\em Physical Review A}, 71(2):022316, 2005.

\bibitem{litinski2019magic}
Daniel Litinski.
\newblock {Magic state distillation: Not as costly as you think}.
\newblock {\em Quantum}, 3:205, 2019.

\bibitem{vandaele2024optimal}
Vivien Vandaele, Simon Martiel, Simon Perdrix, and Christophe Vuillot.
\newblock {Optimal Hadamard gate count for Clifford+ T synthesis of Pauli rotations sequences}.
\newblock {\em ACM Transactions on Quantum Computing}, 5(1):1--29, 2024.

\bibitem{ding2020square}
Yongshan Ding, Xin-Chuan Wu, Adam Holmes, Ash Wiseth, Diana Franklin, Margaret Martonosi, and Frederic~T Chong.
\newblock {Square: Strategic quantum ancilla reuse for modular quantum programs via cost-effective uncomputation}.
\newblock In {\em 2020 ACM/IEEE 47th Annual International Symposium on Computer Architecture (ISCA)}, pages 570--583. IEEE, 2020.

\bibitem{hua2023caqr}
Fei Hua, Yuwei Jin, Yanhao Chen, Suhas Vittal, Kevin Krsulich, Lev~S Bishop, John Lapeyre, Ali Javadi-Abhari, and Eddy~Z Zhang.
\newblock {CaQR: A Compiler-Assisted Approach for Qubit Reuse through Dynamic Circuit}.
\newblock In {\em Proceedings of the 28th ACM International Conference on Architectural Support for Programming Languages and Operating Systems, Volume 3}, pages 59--71, 2023.

\bibitem{decross2023qubit}
Matthew DeCross, Eli Chertkov, Megan Kohagen, and Michael Foss-Feig.
\newblock {Qubit-reuse compilation with mid-circuit measurement and reset}.
\newblock {\em Physical Review X}, 13(4):041057, 2023.

\bibitem{brandhofer2023optimal}
Sebastian Brandhofer, Ilia Polian, and Kevin Krsulich.
\newblock {Optimal Qubit Reuse for Near-Term Quantum Computers}.
\newblock In {\em 2023 IEEE International Conference on Quantum Computing and Engineering (QCE)}, volume~1, pages 859--869. IEEE, 2023.

\bibitem{sadeghi2022quantum}
Movahhed Sadeghi, Soheil Khadirsharbiyani, and Mahmut~Taylan Kandemir.
\newblock {Quantum Circuit Resizing}.
\newblock {\em arXiv preprint arXiv:2301.00720}, 2022.

\bibitem{de2019pauli}
Niel de~Beaudrap, Ross Duncan, Dominic Horsman, and Simon Perdrix.
\newblock {Pauli Fusion: a computational model to realise quantum transformations from ZX terms}.
\newblock {\em arXiv preprint arXiv:1904.12817}, 2019.

\bibitem{de2020zx}
Niel de~Beaudrap and Dominic Horsman.
\newblock {The ZX calculus is a language for surface code lattice surgery}.
\newblock {\em Quantum}, 4:218, 2020.

\bibitem{horsman2012surface}
Dominic Horsman, Austin~G Fowler, Simon Devitt, and Rodney Van~Meter.
\newblock {Surface code quantum computing by lattice surgery}.
\newblock {\em New Journal of Physics}, 14(12):123011, 2012.

\bibitem{karp1972reducibility}
Richard~M. Karp.
\newblock {Reducibility Among Combinatorial Problems}.
\newblock In Raymond~E. Miller and James~W. Thatcher, editors, {\em Proceedings of a symposium on the Complexity of Computer Computations}, The {IBM} Research Symposia Series, pages 85--103. Plenum Press, New York, 1972.

\bibitem{gavril1977some}
Fanica Gavril.
\newblock Some np-complete problems on graphs.
\newblock In {\em Annual Conference on Information Sciences and Systems}, pages 91--95, 1977.

\bibitem{kinnersley1992vertex}
Nancy~G Kinnersley.
\newblock {The vertex separation number of a graph equals its path-width}.
\newblock {\em Information Processing Letters}, 42(6):345--350, 1992.

\bibitem{kashiwabara1979np}
Toshinobu Kashiwabara.
\newblock {NP-completeness of the problem of finding a minimal-cliquenumber interval graph containing a given graph as a subgraph}.
\newblock In {\em Proc. 1979 Int. Symp. Circuit Syst}, pages 657--660, 1979.

\bibitem{ohtsuki1979one}
Tatsuo Ohtsuki, Hajimu Mori, E~Kuh, Toshinobu Kashiwabara, and Toshio Fujisawa.
\newblock {One-dimensional logic gate assignment and interval graphs}.
\newblock {\em IEEE Transactions on Circuits and Systems}, 26(9):675--684, 1979.

\bibitem{robertson1983graph}
Neil Robertson and Paul~D Seymour.
\newblock {Graph minors. I. Excluding a forest}.
\newblock {\em Journal of Combinatorial Theory, Series B}, 35(1):39--61, 1983.

\bibitem{bodlaender1998partial}
Hans~L Bodlaender.
\newblock {A partial k-arboretum of graphs with bounded treewidth}.
\newblock {\em Theoretical computer science}, 209(1-2):1--45, 1998.

\bibitem{bathie2022pace}
Gabriel Bathie, Ga{\'e}tan Berthe, Yoann Coudert-Osmont, David Desobry, Amadeus Reinald, and Mathis Rocton.
\newblock {PACE Solver Description: DreyFVS}.
\newblock In {\em 17th International Symposium on Parameterized and Exact Computation}, page~28, 2022.

\bibitem{github}
\url{https://github.com/VivienVandaele/quantum_circuit_optimization}, 2024.

\bibitem{amyGithub}
Matthew Amy.
\newblock {Feynman}.
\newblock \url{https://github.com/meamy/feynman}.

\bibitem{reversibleBenchmarks}
Dmitri Maslov.
\newblock {Reversible Logic Synthesis Benchmarks page}.
\newblock \url{http://webhome.cs.uvic.ca/~dmaslov}.
\newblock Accessed July 2024.

\bibitem{vandaele2024optimalnumberparametrizedrotations}
Vivien Vandaele, Simon Perdrix, and Christophe Vuillot.
\newblock {Optimal number of parametrized rotations and Hadamard gates in parametrized Clifford circuits with non-repeated parameters}.
\newblock {\em arXiv preprint arXiv:2407.07846}, 2024.

\end{thebibliography}

\end{document}